\newcommand{\vectornorm}[1]{\left|\left|#1\right|\right|}
\newtheorem{lemma}{Lemma}
\newtheorem{proposition}{Proposition}
\theoremstyle{remark}
\newtheorem*{rem}{Remark}
\begin{document}

%
\title{Blahut-Arimoto Algorithm and Code Design for Action-Dependent Source
  Coding Problems}
%
%
%

\author{Kasper~Fløe~Trillingsgaard,~Osvaldo~Simeone,~Petar~Popovski~and~Torben~Larsen}
\maketitle

\begin{abstract}
  The source coding problem with action-dependent side information at the decoder
  has recently been introduced to model data acquisition in
  resource-constrained systems. In this paper, an efficient algorithm for
  numerical computation of the rate-distortion-cost function for this problem is
  proposed, and a convergence proof is provided. Moreover, a two-stage code
  design based on multiplexing is put forth, whereby the first stage encodes the actions and
  the second stage is composed of an array of classical Wyner-Ziv codes, one for
  each action. Specific coding/decoding strategies are designed based on LDGM
  codes and message passing. Through numerical examples, the proposed code
  design is shown to achieve performance close to the lower bound dictated by
  the rate-distortion-cost function.

\end{abstract}

\begin{IEEEkeywords}
Rate-distortion theory, side information ``vending machine'', Blahut-Arimoto
algorithm, code design, LDGM, message passing.
\end{IEEEkeywords}

%
\IEEEpeerreviewmaketitle

\section{Introduction}
%
%
%
\IEEEPARstart{T}{he} source coding problem in which the decoder can take actions
that affect the availability or quality of the side information at the decoder was
introduced in \cite{permuter}. The problem generalizes the well-known Wyner-Ziv
set-up and can be used to model data acquisition in resource-constrainted systems, such as
sensor networks. In the model studied in \cite{permuter}, each
action is associated a cost and the system design is subject to an average cost
constraint. The information-theoretic analysis of the problem was fully
addressed in \cite{permuter}. In this paper, instead, we tackle the practical
open issues, namely the computation of the rate-distortion-cost function and
code design.

Specifically, the rate-distortion-cost function for the source coding problem with
action-dependent side information was derived in \cite{permuter}. However, no
specific algorithm was proposed for its computation. A first contribution of this paper is
to propose such an algorithm by generalizing the classical Blahut-Arimoto (BA)
approach, which was introduced for the Wyner-Ziv problem in
\cite{dupuis}. Convergence of the algorithm is also proved. 

Moreover, while the theory in \cite{permuter} demonstrates the existence of coding and decoding strategies able
to achieve the rate-distortion-cost bound, practical code constructions have not
been investigated yet. It is recalled that, for classical lossy source coding problems, codes that have been
able to achieve rate-distortion bound include Low Density Generator Matrix
(LDGM) codes \cite{iterative_quantization}, polar codes \cite{polar_sc} and
trellis-based quantization codes \cite{trellis_based_quantization}.
For the Wyner-Ziv problem, efficient codes include compound LDPC/LDGM codes
\cite{ldgm_wyner} and polar codes \cite{polar_sc}.  A second contribution of
this paper is hence the study of code design for source coding problems with
action-dependent side information. As shown in \cite{permuter}, optimal codes
for this problem have a successive refinement structure, in which the first
layer produces the action sequence and the refinement layer uses binning to
leverage the side information at the decoder. Here, we first observe that a layered code structure in
which the refinement layer uses a multiplexing of separate classical Wyner-Ziv
codes, one for each action, is optimal. This allows us to simplify the code
structure with respect to the successive refinement strategy in
\cite{permuter}. LDGM-based codes with message passing encoding are designed and
demonstrated via numerical results to perform close to the rate-distortion-cost
function.

The paper is organized as follows. In Section~\ref{sec:background}, the
action-dependent source coding problem is described and results from
\cite{permuter} are summarized. In Section~\ref{sec:arimoto}, we describe the proposed
algorithm for computation of the rate-distortion-cost function, and in
Section~\ref{sec:code_design}, a practical code design is proposed. Finally, in
Section~\ref{sec:examples}, we present numerical results for a specific example.

\subsection{Notation}
Throughout this work, we let upper case, lower case and calligraphic letters
denote random variables, values and alphabets of the random variables, respectively. For jointly
distributed random variables, $P_X(x)$, $P_{X|Y}(x|y)$ and $P_{X,Y}(x,y)$ denote the
probability mass function (pmf) of
$X$, the conditional pmf of $X$ given $Y$ and the joint pmf of $X$ and $Y$. To
simplify notation, the subscripts of the pmfs may be omitted, e.g., 
$P(x|y)$ may be used instead of $P_{X|Y}(x|y)$. The notation $X^n$ represents the tuple $(X_1, X_2,
\ldots, X_n)$, and $[a,b]$ where $a,b\in\mathbb{Z}$ with $a<b$ denotes the set of integers
$\{a,a+1,\ldots, b-1,b\}$.  Moreover, $\mathbb{Z}_+=\{0,1,\ldots\}$,
$\mathbb{N}=\mathbb{Z}_+ \setminus \{0\}$ and $\mathbbm{1}_{\left\{\verb|cond|\right\}}$ denotes the indicator function, and is
one when \verb|cond| is true, and zero otherwise.
The notation $\lfloor \cdot \rfloor$ and $\lceil \cdot \rceil$ denotes the
floor and ceiling operators, respectively.

\section{Background}
\label{sec:background}
In this section, we recall the definition of source coding problems with
action-dependent side information and review the rate-distortion-cost function
obtained in \cite{permuter}.
\subsection{System Model}
\label{sec:system_model}
\begin{figure}[!t]
  \center
  \includegraphics[width=2.5in]{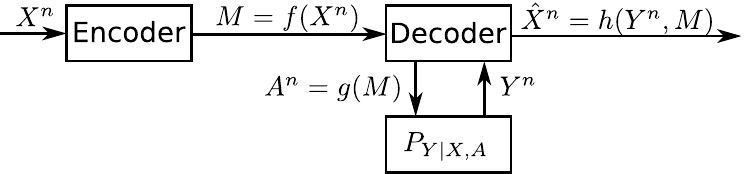}
  \caption{Source coding with action-dependent side information.}
  \label{fig:action_model}
\end{figure}
The source coding problem with action-dependent side
information introduced in
\cite{permuter} is illustrated in Fig.~\ref{fig:action_model}. In this problem, the
source $X^n\in\mathcal{X}^n$ is memoryless and each sample is distributed
according to the pmf $P_X$. At the encoder, the encoding function 
\begin{align}
f: \mathcal{X}^n\rightarrow \left[1,\lfloor 2^{nR}\rfloor \right],
\end{align}
  maps the source $X^n$ into a message $M\in\left[1,\lfloor
    2^{nR}\rfloor\right]$, where $R$ denotes the rate in bits per sample.
At the decoder, an action sequence
$A^n\in\mathcal{A}^n$ is chosen according to an action strategy 
\begin{align}
g:\left[1,\lfloor 2^{nR}\rfloor\right]\rightarrow \mathcal{A}^n,
\end{align}
which maps the message $M$ into an action sequence $A^n$. Based on $A^n$, the
side information $Y^n\in\mathcal{Y}^n$ is conditionally independent and
identically distributed (iid) according to the conditional pmf
$P_{Y|X,A}$ 
so that we have
\begin{align}
  P_{Y^n|X^n,A^n}(y^n|x^n,a^n)=\prod_{i=1}^n   P_{Y|X,A}(y_i|x_i,a_i).
\end{align}
The decoder makes a reconstruction $\hat X^n\in\mathcal{\hat X}^n$ of $X^n$ according to
the decoding function 
\begin{align}
h: \left[1,\lfloor 2^{nR}\rfloor\right]\times \mathcal{Y}^n\rightarrow
\mathcal{\hat X}^n,
\end{align}
which maps message $M$ and side information $Y^n$ into the estimate $\hat X^n$.

The action cost function $\Delta(a):\mathcal{A}\rightarrow \mathbb{R}_{+}$ is defined
such that $\Delta(a)=0$ for some $a\in \mathcal{A}$ and
$\Delta_{\text{max}}=\max_{a\in\mathcal{A}} \Delta(a) < \infty$, and the distortion
function $d(x,\hat x):\mathcal{X}\times \mathcal{\hat X} \rightarrow
\mathbb{R}_{+}$ is defined such that for each $x\in \mathcal{X}$ there is an $\hat x\in
\mathcal{\hat X}$ satisfying $d(x,\hat x)=0$.
The rate-distortion-cost tuple $(R,D,C)$ is then said to be achievable if and only if, for all
$\varepsilon>0$, there exist an encoding
function $f$, an action function $g$ and a decoding function $h$, for all
sufficiently large $n\in\mathbb{N}$, satisfying 
the distortion constraint 
\begin{align}
  \mathbb{E}\left[ \sum_{i=1}^n d(X_i,\hat X_i) \right] \leq n (D+\varepsilon) 
  \label{eq:distortion}
\end{align} 
and the action cost constraint
\begin{align}
  \mathbb{E}\left[\sum_{i=1}^{n} \Delta(A_i)\right] \leq n (C+\varepsilon).
\label{eq:cost}
\end{align}
The rate-distortion-cost function, denoted as $R(D,C)$, is defined as the
infimum of all rates $R$ such that the tuple $(R,D,C)$ is achievable.

\subsection{Rate-Distortion-Cost Function}
The rate-distortion-cost function $R(D,C)$ was derived in \cite{permuter} and is
summarized below.
\begin{lemma}(\cite[Theorem 1]{permuter})
  The rate-distortion-cost function for the source coding problem with
  action-dependent side information is given as
  \begin{align}
    R(D,C) &= \min I(X;A) + I(X;U | Y, A),
    \label{eq:rate_dist_cost}  
  \end{align}
  \begin{align}
    P_{X,Y,A,U}(x,y,a,u) &= P_X(x)P_{U|X}(u|x)\mathbbm{1}_{\{\eta(u)=a\}} P_{Y|X,A}(y|x,a),
\label{eq:lem_joint_pmf}
  \end{align}
  and the minimization is over all pmfs
  $P_{U|X}$ and deterministic functions $\eta: \mathcal{U}\rightarrow
  \mathcal{A}$ under which the conditions
  \begin{align}
    \mathbb{E}[d(X,\hat X^{\text{opt}} (U,Y))] \leq D,
\label{eq:exp_dist}
  \end{align}
  and
  \begin{align}
    \mathbb{E}[\Delta(A)]\leq C
\label{eq:cost_eq_lem}
  \end{align}
hold. The function $\hat X^{\text{opt}}: \mathcal{U}\times \mathcal{Y}
\rightarrow \mathcal{\hat X}$ denotes the best estimate of $X$ given $U$ and
$Y$, i.e.,
\begin{align}
  \hat X^{\text{opt}}(u,y)=\arg \min_{\hat x\in \mathcal{\hat X}}
  \mathbb{E}[d(X,\hat x)|U=u,Y=y].
  \label{eq:def_xopt}
\end{align}
Moreover, the cardinality of the set $\mathcal{U}$ can be restricted as
$|\mathcal{U}|\leq |\mathcal{X}||\mathcal{A}|+2$.
\label{lem:permuter}
\end{lemma}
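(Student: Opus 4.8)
The statement is the result of \cite{permuter}; the plan is to reconstruct its proof, which splits as usual into an achievability (direct) part exhibiting codes of rate approaching $I(X;A)+I(X;U|Y,A)$ and a matching converse. The guiding observation I would use is the decomposition $I(X;A)+I(X;U|Y,A)=I(X;A)+\big(I(X;U|A)-I(U;Y|A)\big)$, which suggests a two-layer random code: a first layer conveying a lossy (covering) description of $X^n$ by an action codeword $A^n$ at rate $I(X;A)$, and a second layer that, \emph{for the already-chosen action}, runs a Wyner--Ziv code (covering at rate $I(X;U|A)$ followed by binning that removes $I(U;Y|A)$), leaving net rate $I(X;U|Y,A)$.

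For achievability I would fix $P_{U|X}$ and $\eta$ meeting \eqref{eq:exp_dist}--\eqref{eq:cost_eq_lem} and let $\hat X^{\text{opt}}$ be as in \eqref{eq:def_xopt}. Generate $2^{n(I(X;A)+\epsilon)}$ action sequences $a^n(m_1)$ i.i.d.\ $\sim\prod_iP_A$; for each $m_1$, generate $2^{n(I(X;U|A)+\epsilon)}$ sequences $u^n(m_1,\ell)$ i.i.d.\ $\sim\prod_iP_{U|A}(\cdot\,|\,a_i(m_1))$ and distribute them uniformly into $2^{n(I(X;U|Y,A)+2\epsilon)}$ bins. On input $x^n$, the encoder picks $m_1$ with $\big(x^n,a^n(m_1)\big)$ jointly typical, then $\ell$ with $\big(x^n,a^n(m_1),u^n(m_1,\ell)\big)$ jointly typical, and sends $m_1$ together with the bin index $m_2$ of $\ell$; both searches succeed with high probability by the covering lemma, for a total rate $I(X;A)+I(X;U|Y,A)+O(\epsilon)$. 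The decoder reads $m_1$, takes action $a^n(m_1)$ --- thereby generating $Y^n\sim\prod_iP_{Y|X,A}(\cdot\,|\,x_i,a_i(m_1))$ --- and searches bin $m_2$ for the unique $u^n(m_1,\ell)$ jointly typical with $\big(a^n(m_1),Y^n\big)$, which by the packing lemma succeeds with high probability since there are roughly $2^{n(I(X;U|A)-I(X;U|Y,A))}=2^{n(I(U;Y|A)-\epsilon)}$ competitors per bin, each independent of $Y^n$ given $a^n(m_1)$; it then outputs $\hat x_i=\hat X^{\text{opt}}(u_i,y_i)$ symbol by symbol. The delicate point --- where the action-dependence bites, beyond the classical Wyner--Ziv argument --- is that $Y^n$ is produced only \emph{after} $A^n$ has been chosen as a function of $X^n$: I would invoke the conditional typicality (Markov) lemma to argue that, conditioned on $\big(x^n,a^n,u^n\big)$ being jointly typical, the freshly drawn $Y^n$ is jointly typical with it, which hinges on the Markov chain $U-(X,A)-Y$ built into \eqref{eq:lem_joint_pmf}. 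Given joint typicality of $\big(X^n,A^n,U^n,Y^n\big)$, the typical-average lemma yields $\tfrac1n\mathbb E\big[\sum_id(X_i,\hat X_i)\big]\le D+\epsilon'$ and $\tfrac1n\mathbb E\big[\sum_i\Delta(A_i)\big]\le C+\epsilon'$ (using boundedness of $d$ and of $\Delta$, cf.\ $\Delta_{\text{max}}<\infty$), and letting $\epsilon\to0$, $n\to\infty$ closes this direction.

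For the converse I would take any $(f,g,h)$ of rate $R$ satisfying \eqref{eq:distortion}--\eqref{eq:cost} with $(D,C)$ replaced by $(D+\epsilon,C+\epsilon)$, write $A^n=g(M)$ and $\hat X^n=h(M,Y^n)$, and start from $nR\ge H(M)\ge I(X^n;M)=I(X^n;M,A^n)=I(X^n;A^n)+I(X^n;M\,|\,A^n)$. The first term single-letterizes directly, since $X^n$ is i.i.d.\ and conditioning reduces entropy: $I(X^n;A^n)\ge\sum_iI(X_i;A_i)$. The second term single-letterizes by a Wyner--Ziv-style manipulation --- injecting $Y^n$, using $I(M;Y^n|X^n,A^n)=0$ (as $M$ is a function of $X^n$) and the Csisz\'ar sum identity --- to give a bound of the form $I(X^n;M\,|\,A^n)\ge\sum_iI(X_i;U_i\,|\,Y_i,A_i)$ for an auxiliary variable $U_i$ built from $M$ and the side-information sequence (the standard Wyner--Ziv choice, adjusted so that $\hat X_i$ is a function of $(U_i,Y_i)$). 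I would then check the structural facts that make this a valid point of \eqref{eq:lem_joint_pmf}: the per-letter law of $(X_i,A_i,U_i,Y_i)$ factorizes as in \eqref{eq:lem_joint_pmf} --- in particular $Y_i-(X_i,A_i)-U_i$ holds because $Y_i\sim P_{Y|X,A}(\cdot\,|\,x_i,a_i)$ depends on the rest only through $(X_i,A_i)$, and, crucially, $A_i=g_i(M)$ is a deterministic function of $M$, a component of $U_i$, so that $A_i=\eta_i(U_i)$; moreover, $\hat X_i$ being a function of $(U_i,Y_i)$ lets us replace it by $\hat X^{\text{opt}}(U_i,Y_i)$ without increasing the distortion. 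Finally, introducing a time-sharing variable $Q$ uniform on $[1,n]$ and independent of everything, setting $X=X_Q$, $A=A_Q$, $Y=Y_Q$, $U=(U_Q,Q)$, and using the Markov chain $Q-(X_Q,A_Q)-Y_Q$ (valid because $P(Y_Q=y\,|\,X_Q,A_Q,Q)=P_{Y|X,A}(y\,|\,X_Q,A_Q)$ does not depend on $Q$) to fold $Q$ into $U$ in the favorable direction, I would obtain $R\ge I(X;A)+I(X;U|Y,A)$ for a joint law of the form \eqref{eq:lem_joint_pmf} meeting \eqref{eq:exp_dist}--\eqref{eq:cost_eq_lem} with $(D+\epsilon,C+\epsilon)$; the claim then follows on letting $\epsilon\to0$ and using continuity of the single-letter bound in $(D,C)$.

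The cardinality bound $|\mathcal U|\le|\mathcal X||\mathcal A|+2$ I would get from the standard support-reduction (Carath\'eodory-type) argument applied to the map $u\mapsto\big(P_{X,A|U}(\cdot,\cdot\,|\,u),\,H(X\,|\,U=u,Y,A),\,\mathbb E[d(X,\hat X^{\text{opt}}(u,Y))\,|\,U=u]\big)$: retaining $P_{X,A}$ costs $|\mathcal X||\mathcal A|-1$ parameters (which freezes $P_X$, $P_A$, $\mathbb E[\Delta(A)]$, $I(X;A)$, and, since $P_{Y|X,A}$ is fixed, $H(X|Y,A)$), and retaining $H(X|U,Y,A)$ --- hence $I(X;U|Y,A)$ --- and the expected distortion costs two more, so $|\mathcal X||\mathcal A|+1$ functionals plus normalization suffice, i.e.\ $|\mathcal U|\le|\mathcal X||\mathcal A|+2$. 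I expect the main obstacle to be the converse: making the two layers $I(X;A)$ and $I(X;U|Y,A)$ emerge from a single single-letterization while carrying along the deterministic constraint $A_i=\eta_i(U_i)$ and the right reconstruction function, since the action-dependence of the side information makes the Markov-chain bookkeeping more delicate than in the classical Wyner--Ziv converse.
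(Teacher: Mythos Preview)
The paper does not actually prove this lemma: it is stated as a citation of \cite[Theorem~1]{permuter} and no proof environment follows. The only related material in the paper is the informal description of the achievability scheme in Section~\ref{sec:optimal_coding} (``Optimal Coding Strategy''), which matches your achievability sketch essentially verbatim --- action codebook of size $2^{nI(X;A)}$, then per-action source codebooks of size $2^{nI(X;U|A)}$ with binning down to $2^{nI(X;U|Y,A)}$, followed by symbolwise estimation via $\hat X^{\text{opt}}$.

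Your proposal therefore goes well beyond what the paper provides: the converse argument (splitting $I(X^n;M,A^n)$ into $I(X^n;A^n)+I(X^n;M|A^n)$, single-letterizing each piece, building $U_i$ so that $A_i=\eta_i(U_i)$ via $A^n=g(M)$, and time-sharing) and the Carath\'eodory cardinality count are not in this paper at all. They are the standard ingredients one would expect in the original reference \cite{permuter}, and your sketch is correct as a reconstruction of that proof, but there is nothing here to compare against --- the present paper simply imports the result.
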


\subsection{Optimal Coding Strategy}
\label{sec:optimal_coding}
\begin{figure}[!t]
  \center
  \includegraphics[width=2.3in]{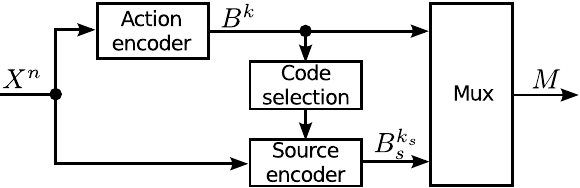}
  \caption{Optimal encoder for source coding problems with action-dependent side
    information.}
  \label{fig:optimal_code}
\end{figure}
The proof of achievability of the rate-distortion-cost function in
\cite{permuter} shows that an optimal encoder has the structure illustrated in
Fig.~\ref{fig:optimal_code} and consists of the following
two steps.
  \begin{itemize}
  \item \underline{Action Coding}: The source sequence $X^n$ is mapped to an action sequence
    $A^n$. The action sequence is selected from a codebook $\mathcal{C}_A$ of
    about $2^{n I(X;A)}$ codewords, each type approximately equal to $P_A$. The
    index $B^k$ identifies the selected codeword $A^n$, and hence consists of $k$,
    approximately equal to $nI(X;A)$, bits. The selection of $A^n$ is done with the
    aim of ensuring that $A^n$ and $X^n$ are jointly typical with respect to the
    joint pmf $P_{X,A}(x,a)=P_{A|X}(a|x)P_X(x)$.
  \item \underline{Source Coding}: Given the action sequence $A^n$, a source
    codebook is chosen out of a set of around $2^{nI(X;A)}$ codebooks, one for
    each codeword in $\mathcal{C}_A$. Each codeword $U^n$ in the selected source codebook has a
    joint type with $A^n$ close to $P_{A,U}$, and the number of codewords is about
    $2^{n I(X;U|A)}$. The source
    sequence is mapped to a sequence $U^n$ taken from the selected
    codebook with joint type $P_{A,U}$ and with the objective of ensuring
    that $X^n$, $A^n$ and $U^n$ are jointly typical with respect to the joint pmf
    $P_{X,A,U}(x,a,u)$. Each source codebook is divided into around $2^{nI(X;U|A,Y)}$
    subcodebooks, or bins, in order to leverage the side information at the
    receiver using Wyner-Ziv decoding.
  \end{itemize}
The message $M$ is given by the concatenation of the bits $B^k$ and $B^{k_s}_s$
and thus the overall rate of the action code and the source codes is given by
\eqref{eq:rate_dist_cost}. Upon receiving the message $M$ from the encoder, the
decoder first reconstructs the action sequence $A^n$. The action sequence is used to
measure the side information $Y^n$. As $A^n$ is known, the decoder also knows the
source codebook from which $U^n$ is selected, and $U^n$ is then recovered by
using Wyner-Ziv decoding based on the side information $Y^n$. In the end, the
final estimate $\hat X^n$ is obtained as $\hat X_i=\hat X^{\text{opt}}(U_i,
Y_i)$ for $i\in[1,n]$.

\section{Computation of the Rate-Distortion-Cost Function}
\label{sec:arimoto}
In this section, we first reformulate the problem in \eqref{eq:rate_dist_cost}
by introducing Shannon strategies. This result is then used to propose a
BA-type algorithm for the computation of the rate-distortion-cost function
\eqref{eq:rate_dist_cost}.

\subsection{Shannon Strategies}
We first observe that, from Lemma \ref{lem:permuter}, it is sufficient to
restrict the minimization to all joint distributions for which $A$ is a
deterministic function $A=\eta(U)$. Moreover, the final estimate of $\hat X$ in
\eqref{eq:def_xopt} is a function of both $U$ and $Y$. Based on these facts, we
define a Shannon strategy $T \in \mathcal{T}\subseteq
\mathcal{X}^{|\mathcal{Y}|}\times \mathcal{A}$ as a vector of cardinality
$|\mathcal{Y}|+1$, in which the first $|\mathcal{Y}|$ elements are indexed by
the elements in $\mathcal{Y}$ and $T(y)\in \mathcal{\hat X}$ for
$y\in\mathcal{Y}$, and the last element is denoted $\mathrm{a}(T)\in
\mathcal{A}$. We also define the disjoint
sets $\mathcal{T}^a=\{t\in \mathcal{T} : \mathrm{a}(t) = a\}$ for all actions
$a\in \mathcal{A}$. The rate-distortion-cost function \eqref{eq:rate_dist_cost}
can be restated in terms of the defined Shannon strategies as formalized in
the next proposition.
\begin{proposition}
  Let $T\in \mathcal{T}\subseteq \mathcal{X}^{|\mathcal{Y}|}\times \mathcal{A}$
  denote a Shannon strategy vector as defined above.
  The rate-distortion-cost function in
  \eqref{eq:rate_dist_cost} can be expressed as
  \begin{align}
    R(D,C) = \min I(X;\mathrm{a}(T)) + I(X;T | Y, \mathrm{a}(T)),
\label{eq:rdc_prop1}
  \end{align}
where the joint pmf $P_{X,Y,T}$ is of the form
\begin{align}
  &P_{X,Y,T}(x,y,t) = P_X(x)P_{T|X}(t|x)P_{Y|A,X}(y|\mathrm{a}(t),x),
\label{eq:joint_pxyt}
\end{align}
and the minimization is over all pmfs $P_{T|X}$ under the constraints
\begin{align}
  \mathbb{E}[\Delta(A)] = \sum_{t\in \mathcal{T},x\in\mathcal{X}}
  P_X(x) P_{T|X}(t|x)\Delta(\mathrm{a}(t)) \leq C
\label{eq:cost_ineq}
\end{align}
and
\begin{align}
  \mathbb{E}[d(X,T(Y))] = \sum_{t\in \mathcal{T}, x\in \mathcal{X}, y\in\mathcal{Y}} P_{X,Y,T}(x,y,t)d(t(y),x) \leq D.\label{eq:dist_ineq}
\end{align}
Moreover, the cardinality of the alphabet $\mathcal{T}$ can be restricted as
$|\mathcal{T}|\leq |\mathcal{X}| |\mathcal{A}|+2$.
\label{prop_stategies}
\end{proposition}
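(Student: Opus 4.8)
The plan is to show the two optimization problems---the one in \eqref{eq:rate_dist_cost} over pairs $(P_{U|X},\eta)$ and the one in \eqref{eq:rdc_prop1} over $P_{T|X}$---have the same value by exhibiting a correspondence between feasible points that preserves both the objective and the two constraints. The key observation, already flagged before the proposition, is that by Lemma~\ref{lem:permuter} we may restrict to auxiliary variables with $A=\eta(U)$ deterministic, and the reconstruction $\hat X^{\text{opt}}(u,y)$ is itself a deterministic function of $(u,y)$. So each value $u\in\mathcal{U}$ already ``encodes'' a full vector of reconstructions $(\hat X^{\text{opt}}(u,y))_{y\in\mathcal{Y}}$ together with an action $\eta(u)$---which is exactly the data carried by a Shannon strategy $t\in\mathcal{T}$. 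This suggests the map $u\mapsto t_u$ where $t_u(y)=\hat X^{\text{opt}}(u,y)$ and $\mathrm{a}(t_u)=\eta(u)$.

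First I would establish the ``$\le$'' direction: given an optimal (or near-optimal) $(P_{U|X},\eta)$ for \eqref{eq:rate_dist_cost}, push it forward through $u\mapsto t_u$ to define $P_{T|X}(t|x)=\sum_{u:t_u=t}P_{U|X}(u|x)$. One checks that $A=\mathrm{a}(T)$ still holds (since $\eta(u)$ depends on $u$ only through $t_u$), so the joint law \eqref{eq:joint_pxyt} is consistent with \eqref{eq:lem_joint_pmf} and the Markov structure $Y-(X,A)-$everything is preserved. Then $I(X;\mathrm{a}(T))=I(X;A)$ trivially, and $I(X;T|Y,\mathrm{a}(T))\le I(X;U|Y,A)$ by the data-processing inequality applied to the Markov chain $X-U-(T,A)$ conditioned on $Y$ (or one can note $T$ is a function of $(U,Y,A)$ and argue directly). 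The cost constraint \eqref{eq:cost_ineq} matches \eqref{eq:cost_eq_lem} exactly because $\Delta$ depends on $t$ only through $\mathrm{a}(t)=\eta(u)$. The distortion is the delicate point: \eqref{eq:exp_dist} uses $\hat X^{\text{opt}}(U,Y)$, and by construction $T(Y)=\hat X^{\text{opt}}(U,Y)$ almost surely, so $\mathbb{E}[d(X,T(Y))]=\mathbb{E}[d(X,\hat X^{\text{opt}}(U,Y))]\le D$, giving \eqref{eq:dist_ineq}. Hence the minimum in \eqref{eq:rdc_prop1} is no larger.

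For the reverse inequality ``$\ge$'', take any feasible $P_{T|X}$ for \eqref{eq:rdc_prop1} and simply regard $U:=T$ as the auxiliary variable, with $\mathcal{U}=\mathcal{T}$, $\eta:=\mathrm{a}(\cdot)$, and $P_{U|X}:=P_{T|X}$. The joint pmf \eqref{eq:joint_pxyt} is then of the form \eqref{eq:lem_joint_pmf}. The only thing to verify is that this $U$ satisfies the distortion constraint \eqref{eq:exp_dist} with the \emph{optimal} estimator: by definition of $\hat X^{\text{opt}}$ in \eqref{eq:def_xopt}, we have $\mathbb{E}[d(X,\hat X^{\text{opt}}(U,Y))]\le \mathbb{E}[d(X,T(Y))]\le D$ for \emph{any} choice of reconstruction function, in particular $T(\cdot)$ itself; and the objective and cost are literally the same expressions. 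So every feasible point of \eqref{eq:rdc_prop1} yields a feasible point of \eqref{eq:rate_dist_cost} with identical objective, giving the matching bound and hence equality. Finally, the cardinality bound $|\mathcal{T}|\le|\mathcal{X}||\mathcal{A}|+2$ is inherited directly from the bound $|\mathcal{U}|\le|\mathcal{X}||\mathcal{A}|+2$ in Lemma~\ref{lem:permuter} via the $\ge$ direction, or can be re-derived by the same Carath\'eodory-type argument (preserving $P_X$, the two constraint functionals, and the conditional mutual information terms).

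I expect the main obstacle to be bookkeeping around the conditional mutual information term $I(X;T\mid Y,\mathrm{a}(T))$ under the push-forward: one must be careful that conditioning on $Y$ does not break the functional relationship, and that the chain $X-U-T$ conditioned on $(Y,A)$ is genuinely a Markov chain so data processing applies cleanly. A clean way to sidestep subtleties is to prove both inequalities purely at the level of feasible sets and objective values as above, never needing to argue that an \emph{optimal} $U$ is ``already'' of Shannon-strategy form---the $\ge$ direction uses $U=T$ directly, and the $\le$ direction only needs the DPI bound, which is in the right direction for a minimization.
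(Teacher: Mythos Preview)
Your proposal is correct and follows essentially the same approach as the paper: the paper's proof defines, for each $u\in\mathcal{U}$, the strategy $t$ with $\mathrm{a}(t)=\eta(u)$ and $t(y)=\hat X^{\text{opt}}(u,y)$ and transfers $P_{U|X}$ to $P_{T|X}$, which is exactly your push-forward map. Your treatment is more careful in that you split the argument into the two inequalities, invoke the data-processing inequality explicitly for the $\le$ direction (rather than asserting equality of the mutual informations), and spell out the $\ge$ direction via $U:=T$ together with the optimality of $\hat X^{\text{opt}}$---all of which the paper leaves implicit in its one-line sketch.
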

\begin{proof}
  Given an alphabet $\mathcal{U}$, a pmf $P_{U|X}$ and a function
  $\eta:\mathcal{U}\rightarrow \mathcal{A}$, the sum of the two mutual
  informations in \eqref{eq:rate_dist_cost} can be seen to be equal to the sum
  of the two mutual informations in \eqref{eq:rdc_prop1} and the average
  distortion and cost in \eqref{eq:exp_dist} and \eqref{eq:cost_eq_lem} to be
  equal to \eqref{eq:dist_ineq} and \eqref{eq:cost_ineq}, respectively, by
  defining $P_{T|X}$ as follows.
  For each $u\in \mathcal{U}$, define a strategy $t$ with
  $P_{T|X}(t|x)=P_{U|X}(u|x)$ such that $\mathrm{a}(t)=\eta(u)$ and $t(y) = \hat
  X^{\text{opt}}(u,y)$ for $y\in \mathcal{Y}$.
\end{proof}
\begin{rem}
  The characterization in Proposition \ref{prop_stategies} generalizes the
  formulation of the Wyner-Ziv rate-distortion function in terms of Shannon
  strategies given in \cite{dupuis}.
\end{rem}
The following lemma extends to the rate-distortion-cost function $R(D,C)$ some
well-known properties for the rate-distortion function (see, e.g. \cite{gray,
  cover:inf}). This will be useful in the next section when discussing the
computation of $R(D,C)$.
\begin{lemma} The following properties hold for the rate distortion
  cost-function $R(D,C)$:
  \begin{enumerate}
  \item $R(D,C)$ is non-increasing, convex and continuous for $D\in [0,\infty)$
    and $C\in [0,\infty)$.
  \item $R(D,C)$ is strictly decreasing in $D\in[0, D_{\text{max}}(C)]$ and
    $R(D_{\text{max}}(C),C)=0$, where
    \begin{align}
      &D_{\text{max}}(C) = \min_{P_T} \sum_{t\in\mathcal{T},x\in\mathcal{X},
        y\in\mathcal{Y}} P_{X,Y,T}(x,y,t)d(t(y),x),
      \label{eq:Dmax}
    \end{align}
    under the constraint
    \begin{align}
      \mathbb{E}[\Delta(\mathrm{a}(T))] = \sum_{t\in\mathcal{T}}\Delta(\mathrm{a}(t))P_T(t)\leq C.
      \label{eq:action_inde_cost}
    \end{align}
  \item For all $D\in [0,D_{\text{max}}(C)]$, the minimum in \eqref{eq:rdc_prop1} is attained when
    the distortion inequality \eqref{eq:dist_ineq} is satisfied with equality.
  \end{enumerate}
  \label{prop_CONVEXITY}
\end{lemma}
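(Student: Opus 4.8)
The plan is to mirror the classical rate--distortion arguments (see \cite{gray,cover:inf}) in the Shannon--strategy formulation of Proposition~\ref{prop_stategies}, writing $A=\mathrm{a}(T)$ and keeping \eqref{eq:joint_pxyt} as the joint law determined by the chosen $P_{T|X}$. For item~1, monotonicity is immediate from \eqref{eq:rdc_prop1}: raising $D$ or $C$ only relaxes the linear constraints \eqref{eq:dist_ineq}--\eqref{eq:cost_ineq}, so the feasible set of pmfs $P_{T|X}$ enlarges and the minimum cannot increase. Convexity I would obtain not from \eqref{eq:rdc_prop1} directly — the objective $I(X;A)+I(X;T|Y,A)$ need not be convex in $P_{T|X}$, since the law $P_{Y,A}$ that weights the conditional-information term itself varies with $P_{T|X}$, and a disjoint-alphabet mixture of two optimal $P_{T|X}$ only yields convexity up to an additive binary-entropy term — but from Lemma~\ref{lem:permuter}, which identifies $R(D,C)$ with the operational rate--distortion--cost function, via time-sharing: splitting a block into fractions $\lambda$ and $1-\lambda$ and running a code for $(R_1,D_1,C_1)$ on the first part and a code for $(R_2,D_2,C_2)$ on the second achieves $(\lambda R_1+(1-\lambda)R_2,\,\lambda D_1+(1-\lambda)D_2,\,\lambda C_1+(1-\lambda)C_2)$, and letting $R_i\to R(D_i,C_i)$ gives convexity of $R(D,C)$. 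For continuity, note first that $R(D,C)$ is finite on all of $[0,\infty)^2$, since the deterministic $T$ reproducing $X$ exactly under the zero-cost action is always feasible, so $0\le R(D,C)\le\log|\mathcal{X}|$; moreover $R$ is lower semicontinuous, by a compactness argument (for $(D_k,C_k)\to(D_0,C_0)$, minimizers of \eqref{eq:rdc_prop1} admit a convergent subsequence whose limit is feasible for $(D_0,C_0)$ by continuity of the distortion and cost functionals); a finite, convex, lower-semicontinuous function on the polyhedron $[0,\infty)^2$ is continuous there, including on the boundary.

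For item~2, I would first show $R(D_{\text{max}}(C),C)=0$: take a marginal $P_T$ attaining the minimum in \eqref{eq:Dmax} under \eqref{eq:action_inde_cost} and use it as $P_{T|X}(t|x)=P_T(t)$. Then $A$ is independent of $X$, so $I(X;A)=0$; and conditioning \eqref{eq:joint_pxyt} on $(Y,A)=(y,a)$ makes $X$ and $T$ independent, so $I(X;T|Y,A)=0$. This strategy has cost at most $C$ and distortion $D_{\text{max}}(C)$, hence $R(D_{\text{max}}(C),C)\le 0$, and the reverse inequality is trivial. Next I would prove that $R(D,C)=0$ forces $D\ge D_{\text{max}}(C)$: if the minimum in \eqref{eq:rdc_prop1} vanishes, nonnegativity of the two terms forces, at the minimizer, $A$ independent of $X$ and $X$ conditionally independent of $T$ given $(Y,A)$; thus, conditioned on each realized $(y,a)$, the reconstruction $T(y)$ is drawn independently of $X$, and replacing $T$ on $\mathcal{T}^a$ by the deterministic strategy $t^{\star}_a$ with $t^{\star}_a(y)=\arg\min_{\hat x}\mathbb{E}[d(X,\hat x)\mid Y=y,A=a]$ cannot increase the distortion; the resulting strategy is of the ``$T$ independent of $X$'' type with the same cost, so its distortion — hence also the original one — is at least $D_{\text{max}}(C)$. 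Combining, $R(D_{\text{max}}(C),C)=0$ while $R(D,C)>0$ for every $D<D_{\text{max}}(C)$. Strict monotonicity on $[0,D_{\text{max}}(C)]$ then follows from item~1: a convex, non-increasing function of $D$ that takes equal values at some $D_1<D_2$ is constant on $[D_1,\infty)$, which would force $R(D_1,C)=R(D_{\text{max}}(C),C)=0$ with $D_1<D_{\text{max}}(C)$, contradicting the previous line.

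Item~3 is then a short corollary. A minimizer of \eqref{eq:rdc_prop1} exists by the cardinality bound $|\mathcal{T}|\le|\mathcal{X}||\mathcal{A}|+2$ together with compactness of the associated feasible set and continuity of the objective; if such a minimizer at $(D,C)$ with $D\le D_{\text{max}}(C)$ had distortion $D'<D$, it would also be feasible at $(D',C)$, whence $R(D',C)\le R(D,C)$, contradicting the strict decrease of $R(\cdot,C)$ on $[0,D_{\text{max}}(C)]$ (for $D=D_{\text{max}}(C)$ use instead $R(D',C)>0=R(D_{\text{max}}(C),C)$). Hence every minimizer meets \eqref{eq:dist_ineq} with equality. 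I expect the two delicate steps to be the convexity in item~1 — which is why I route it through the operational characterization rather than a direct single-letter computation — and the implication $R(D,C)=0\Rightarrow D\ge D_{\text{max}}(C)$ in item~2, i.e., converting the vanishing of both mutual informations into the distortion lower bound $D_{\text{max}}(C)$ through the conditional-independence structure it imposes.
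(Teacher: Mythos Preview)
Your proposal is correct and follows the same classical rate--distortion template the paper invokes (it merely cites \cite[Lemma~10.4.1]{cover:inf}); you have simply spelled the argument out in full for the action-dependent setting. The one genuine deviation is your treatment of convexity in item~1: the textbook route would attempt to show directly that $I(X;A)+I(X;T\mid Y,A)$ is convex in $P_{T|X}$ via a time-sharing auxiliary on the single-letter formula, whereas you bypass the (real) complication that $P_{Y,A}$ varies with $P_{T|X}$ by appealing to the operational characterization and time-sharing on codes. Both arguments are valid; yours is cleaner here and avoids having to track how the conditioning law moves, at the modest cost of invoking Lemma~\ref{lem:permuter} rather than staying purely within Proposition~\ref{prop_stategies}.
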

\begin{proof}
  The lemma is proved by the arguments in 
  \cite[Lemma 10.4.1]{cover:inf}. 
\end{proof}
\subsection{Computation of the Rate-Distortion-Cost Function}
\begin{algorithm}[tp]
  \caption{BA-type Algorithm for Computation of the Rate-Distortion-Cost Function}
  \label{alg:rdc}
  \begin{algorithmic}
    \STATE \textbf{input}: Lagrange multipliers
    $s\leq 0$ and $m\leq 0$.
    \STATE \textbf{output}: $R(D_{s,m},C_{s,m})$ with $C_{s,m}$ and $D_{s,m}$ as
    in \eqref{eq:Csm}-\eqref{eq:Dsm}.
    \STATE \textbf{initialize}: $P_{T|X}$
    \REPEAT
    \STATE Compute $Q_A$ as in \eqref{eq:Qa_min}.
    \STATE Compute $Q_{T,Y}$ as in \eqref{eq:Qty_min}.
    \STATE Minimize $F(P_{T|X},Q_{T,Y},Q_A)$ with respect to $P_{T|X}$ using
    Algorithm~\ref{alg:subgradient}.
    \UNTIL convergence
    \STATE $P_{T|X}^*\gets P_{T|X}$
  \end{algorithmic}
\end{algorithm}
  In order to derive a BA-type algorithm to solve the problem in
  \eqref{eq:rdc_prop1}, we introduce Lagrange multipliers $m$ for the cost
  constraint in \eqref{eq:cost_ineq} and $s$ for the distortion constraint
  \eqref{eq:dist_ineq}. The following proposition provides a parametric
  characterization of the rate-distortion-cost function in terms of the pair
  $(s,m)$.
\begin{proposition}
For each $s\leq 0$ and $m\leq 0$, define the rate-distortion-cost tuple
$(R_{s,m},D_{s,m},C_{s,m})$ via the following equations
\begin{align}
  R_{s,m} &=s D_{s,m}+m C_{s,m}\nonumber\\&\quad+\min_{P_{T|X}}\left\{ I(X;A)+I(X;T|Y, \mathrm{a}(T))
 - s\mathbb{E}\left[d(X,T(Y))\right]- m \mathbb{E}\left[
  \Delta(\mathrm{a}(T)) \right]\right\},
\label{eq:rate_dist_cost_def}\\
C_{s,m} &= \sum_{t\in \mathcal{T},x\in\mathcal{X}} P_X(x) P^*_{T|X}(t|x)
\Delta(\mathrm{a}(t))\label{eq:Csm},\\
D_{s,m} &= \sum_{t\in \mathcal{T}, x\in\mathcal{X}, y\in\mathcal{Y}} P_X(x)
P^*_{T|X}(t|x) P_{Y|X,A}(y|x,a) d(t(y),x),\label{eq:Dsm}
\end{align}
where $P^*_{T|X}$ denotes a minimizing pmf $P_{T|X}$ for the optimization
problem in \eqref{eq:rate_dist_cost_def}. Then, the following facts hold
\begin{enumerate}
\item The tuple $(R_{s,m},D_{s,m}, C_{s,m})$ lies on the rate-distortion-cost
  function, i.e.,
\begin{align}
R_{s,m} = R(D_{s,m},C_{s,m}).
\end{align}
\item Every point $(R,D,C)$ on the rate-distortion-cost function for
  $D\in[0,D_{\text{max}}(C)]$  can be written as
  \eqref{eq:rate_dist_cost_def}-\eqref{eq:Dsm} for  $s\leq 0$ and $m\leq
  0$;
\item The rate-distortion-cost function is given as
  \begin{align}&R(D,C) = \max_{\stackrel{s\leq 0}{m\leq 0}}
    \left(R_{s,m}+ s (D-D_{s,m})
+
      m(C-C_{s,m})\right).
  \end{align}
\end{enumerate}
\end{proposition}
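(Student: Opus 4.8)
The plan is to follow the standard Lagrangian/parametric argument for rate-distortion functions (as in, e.g., \cite[Section 10.8]{cover:inf} or the treatment in \cite{dupuis}), adapted to the two constraints (distortion and cost) appearing here. The whole statement rests on the convexity properties already established in Lemma~\ref{prop_CONVEXITY}, so the proof is essentially an exercise in convex duality for a function of two variables.

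First I would prove fact 1). Fix $s\le 0$, $m\le 0$ and let $P^*_{T|X}$ attain the minimum in \eqref{eq:rate_dist_cost_def}, with associated distortion $D_{s,m}$ and cost $C_{s,m}$ as in \eqref{eq:Csm}--\eqref{eq:Dsm}. For \emph{any} feasible $P_{T|X}$ with distortion $D$ and cost $C$, optimality of $P^*_{T|X}$ gives
\begin{align}
I(X;A^*)+I(X;T^*|Y,\mathrm{a}(T^*)) - sD_{s,m} - mC_{s,m} \le I(X;A)+I(X;T|Y,\mathrm{a}(T)) - sD - mC,\nonumber
\end{align}
i.e. the objective $I(X;A)+I(X;T|Y,\mathrm{a}(T))$ is lower-bounded by $R_{s,m} + s(D-D_{s,m}) + m(C-C_{s,m})$. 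Minimizing the left side over all $P_{T|X}$ feasible for the \emph{specific} pair $(D_{s,m},C_{s,m})$ shows $R(D_{s,m},C_{s,m}) \ge R_{s,m}$; conversely $P^*_{T|X}$ is itself feasible for $(D_{s,m},C_{s,m})$ and achieves objective value $R_{s,m}$ (here one uses that $R(D,C)$ equals the minimum of the objective over the constraint set, which is $R_{s,m}$ by definition once the $-sD-mC$ terms are added back and cancelled), giving $R(D_{s,m},C_{s,m}) \le R_{s,m}$. Hence equality. The same chain of inequalities, read for arbitrary $(D,C)$, yields the lower bound $R(D,C) \ge R_{s,m} + s(D-D_{s,m}) + m(C-C_{s,m})$ for every $s,m\le 0$, which is the ``$\ge$'' direction of fact 3).

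Next, fact 2) and the matching ``$\le$'' direction of fact 3). This is where the main obstacle lies: I need to show that for every target $(D,C)$ with $D\in[0,D_{\mathrm{max}}(C)]$ there exist multipliers $s\le 0$, $m\le 0$ realizing it, i.e. with $D_{s,m}=D$ and $C_{s,m}=C$ (or, more carefully, that the supporting hyperplane to the epigraph at $(D,C)$ has nonpositive slopes in both coordinates and that its intercept is attained by some $P^*_{T|X}$). Since $R(D,C)$ is convex (Lemma~\ref{prop_CONVEXITY}, part 1)) and non-increasing in each argument, at any interior point of its domain there is a subgradient $(-s,-m)$ with $s\le 0$, $m\le 0$, and the supporting-hyperplane inequality $R(D',C') \ge R(D,C) + (-s)(D'-D) + (-m)(C'-C)$ holds for all $(D',C')$. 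Rearranging and taking the infimum over $P_{T|X}$ shows that $R(D,C) - sD - mC$ equals the unconstrained minimum in \eqref{eq:rate_dist_cost_def}, i.e. $R(D,C) = R_{s,m} + s(D-D_{s,m}) + m(C-C_{s,m})$ for this $(s,m)$ — which combined with the already-proven lower bound gives fact 3), and then re-identifying $D_{s,m}, C_{s,m}$ with $D, C$ (using part 3) of Lemma~\ref{prop_CONVEXITY}, that the distortion constraint is tight, to pin down the distortion coordinate, and a parallel argument for the cost) gives fact 2). The delicate points I expect to spend the most care on are: (i) handling the boundary cases $D=0$ or $D=D_{\mathrm{max}}(C)$ and $C$ at the extremes of its range, where the subgradient may not be unique or a coordinate of it may be forced to zero, and (ii) justifying the interchange of the two minimizations (over $P_{T|X}$ and implicitly over the achievable $(D,C)$) — this is routine because $\mathcal{T}$ is finite so the feasible set of $P_{T|X}$ is a compact polytope and all objective/constraint functions are continuous, guaranteeing the minima are attained, but it should be stated explicitly. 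I would also remark that the existence of a $P^*_{T|X}$ achieving the unconstrained minimum for given $(s,m)$ follows from the same compactness, so $C_{s,m}$ and $D_{s,m}$ in \eqref{eq:Csm}--\eqref{eq:Dsm} are well-defined.
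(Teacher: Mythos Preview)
Your proposal is correct and corresponds to the ``direct'' derivation the paper mentions as an alternative (\`a la \cite{gray}); the paper's own proof is even shorter, simply observing that all three facts follow from strong Lagrangian duality, which holds because Slater's condition is satisfied for the convex program defining $R(D,C)$ (see \cite[Section~5.2.3]{boyd}). Your careful treatment of the supporting-hyperplane argument and the boundary/compactness issues is exactly what one would unpack if asked to spell out the duality claim, so the two routes differ only in level of detail, not in substance.
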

\begin{proof}
  The proposition above follows by strong duality as
  guaranteed by Slater's condition \cite[Section 5.2.3]{boyd}, and can also be
  derived directly as in \cite{gray}.
\end{proof}
Given the proposition above, one can trace the rate-distortion-cost function
by solving problem \eqref{eq:rate_dist_cost_def} and using \eqref{eq:Csm} and
\eqref{eq:Dsm} for all $s\leq 0$ and $m\leq 0$. Inspired by the standard BA
approach, we now show that problem \eqref{eq:rate_dist_cost_def} can be solved
by using alternate optimization with respect to $P_{T|X}$ and
appropriately defined auxiliary pmfs $Q_{T,Y}$ and $Q_A$. To do this, we define
the function $F(\cdot )$ of $P_{T|X}$ and auxiliary pmfs $Q_{T,Y}$ and $Q_A$ as
in \eqref{eq:F}, 
\begin{align}
  &F(P_{T|X},Q_{T,Y},Q_A) = D_{KL}(P_{Y,A}||Q_A)- \sum_{x\in \mathcal{X}, y\in\mathcal{Y}, t\in \mathcal{T}} P_{X,Y,T}(x,y,t) \log
  P_{Y|X,A}(y|x,\mathrm{a}(t)) \nonumber\\
  &\quad+\sum_{x\in\mathcal{X}}P_X(x)D_{KL}(P_{Y,T|X}(\cdot,\cdot|x)||Q_{T,Y}) - s \sum_{t\in
      \mathcal{T}, x\in\mathcal{X}, y\in\mathcal{Y}} P_{X,Y,T}(x,y,t)  d(t(y),x)\nonumber\\
  &\quad- m \sum_{t\in
      \mathcal{T},x\in\mathcal{X}} \Delta(\mathrm{a}(t)) P_X(x)
    P_{T|X}(t|x),
  \label{eq:F}
\end{align}
where $D_{KL}(P||Q)$ denotes the Kullback-Leibler (KL) divergence\footnote{The
  Kullback-Leibler divergence  \cite{cover:inf} is defined as
  $D_{KL}(P||Q)=\sum_{i}P(i)\log_2 \frac{P(i)}{Q(i)}$ for pmfs $P$ and $Q$.}
and $P_{X,Y,T}$, $P_{Y,T|X}$ and $P_{Y,A}$ are calculated from the joint pmf \eqref{eq:joint_pxyt}.
We then have the following result.
\begin{proposition}
For any $s\leq 0$ and $m\leq 0$, we have
\begin{align}
  &R(D_{s,m},C_{s,m}) = sD_{s,m}+mC_{s,m} 
+ \min_{P_{T|X},Q_{T,Y},Q_A}
  F(P_{T|X},Q_{T,Y},Q_A),
\label{eq:prop_rdc}
\end{align}
with \eqref{eq:Csm}-\eqref{eq:Dsm}, where the distribution $P^*_{T|X}$ denotes a minimizing distribution in
\eqref{eq:prop_rdc}. Moreover, the function $F(P_{T|X}, Q_{T,Y},Q_A)$ is jointly
convex in the pmfs $P_{T|X}$, $Q_{T,Y}$ and $Q_A$.
\label{prop_rdc_alt_min}
\end{proposition}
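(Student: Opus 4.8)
The plan is to prove the two assertions separately: the variational identity \eqref{eq:prop_rdc}, and the joint convexity of $F$. For the identity, I would fix $P_{T|X}$ and first carry out the inner minimization over the auxiliary pmfs. The variable $Q_{T,Y}$ enters \eqref{eq:F} only through $\sum_{x}P_X(x)D_{KL}(P_{Y,T|X}(\cdot,\cdot|x)\|Q_{T,Y})$; writing $\sum_{x}P_X(x)D_{KL}(P_{Y,T|X}(\cdot,\cdot|x)\|Q_{T,Y}) = \sum_{x}P_X(x)D_{KL}(P_{Y,T|X}(\cdot,\cdot|x)\|P_{Y,T}) + D_{KL}(P_{Y,T}\|Q_{T,Y})$ and using nonnegativity of the KL divergence, the minimum is attained at $Q_{T,Y}=P_{Y,T}$, i.e. the pmf in \eqref{eq:Qty_min}, with value $I(X;Y,T)$. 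The same argument applied to the $D_{KL}(P_{Y,A}\|Q_A)$ term shows that it is minimized by the pmf $Q_A$ in \eqref{eq:Qa_min}, contributing $-H(Y|A)$ at the optimum. Substituting both minimizers and using the Markov relation $Y\leftrightarrow(X,A)\leftrightarrow T$ implied by \eqref{eq:joint_pxyt}, which forces $P_{Y|X,A}(y|x,\mathrm{a}(t))=P_{Y|X,T}(y|x,t)$, the first three terms of \eqref{eq:F} reduce to $-H(Y|A)+H(Y|X,T)+I(X;Y,T)$.

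The next step is to collapse $-H(Y|A)+H(Y|X,T)+I(X;Y,T)$ to $I(X;A)+I(X;T|Y,A)$. Since $A=\mathrm{a}(T)$ is a deterministic function of $T$, conditioning on $T$ coincides with conditioning on $(T,A)$, so the chain rule gives $I(X;Y,T)=I(X;A)+I(X;T|A)+I(X;Y|T,A)$ and $H(Y|X,T)=H(Y|X,T,A)$. Moreover $Y$ is conditionally independent of $T$ given $(X,A)$, hence $I(T;Y|X,A)=0$, which through $I(T;X|A)+I(T;Y|X,A)=I(T;Y|A)+I(T;X|Y,A)$ yields $I(X;T|A)=I(T;Y|A)+I(X;T|Y,A)$. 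Substituting this and using $H(Y|X,T,A)+I(X;Y|T,A)=H(Y|T,A)$ together with $H(Y|T,A)+I(T;Y|A)=H(Y|A)$, the terms $-H(Y|A)$, $H(Y|X,T)$, $I(X;Y|T,A)$ and $I(T;Y|A)$ combine to zero and one is left with exactly $I(X;A)+I(X;T|Y,A)$. Adding the last two terms of \eqref{eq:F}, which equal $-s\mathbb{E}[d(X,T(Y))]-m\mathbb{E}[\Delta(\mathrm{a}(T))]$, shows that $\min_{Q_{T,Y},Q_A}F(P_{T|X},Q_{T,Y},Q_A)$ equals the bracketed objective of \eqref{eq:rate_dist_cost_def} for every $P_{T|X}$. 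Minimizing over $P_{T|X}$, adding $sD_{s,m}+mC_{s,m}$, and invoking \eqref{eq:rate_dist_cost_def} together with $R_{s,m}=R(D_{s,m},C_{s,m})$ from the preceding proposition gives \eqref{eq:prop_rdc}; the same chain of equalities shows that a minimizing $P_{T|X}$ of the joint problem minimizes \eqref{eq:rate_dist_cost_def}, so $P^*_{T|X}$ is consistent with \eqref{eq:Csm}-\eqref{eq:Dsm}.

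Joint convexity I would verify term by term, using that $D_{KL}(p\|q)$ is jointly convex in $(p,q)$ (a consequence of the log-sum inequality) and that joint convexity is preserved under composition with affine maps and under sums with nonnegative coefficients. By \eqref{eq:joint_pxyt}, the pmfs $P_{X,Y,T}$, $P_{Y,T|X}(\cdot,\cdot|x)$ and $P_{Y,A}$ are affine functions of $P_{T|X}$, while $Q_A$ and $Q_{T,Y}$ enter affinely; hence $D_{KL}(P_{Y,A}\|Q_A)$ is jointly convex in $(P_{T|X},Q_A)$ and $\sum_x P_X(x)D_{KL}(P_{Y,T|X}(\cdot,\cdot|x)\|Q_{T,Y})$ is jointly convex in $(P_{T|X},Q_{T,Y})$, each being constant in the remaining argument. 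The other three terms of \eqref{eq:F} are affine in $P_{T|X}$ and do not involve the auxiliary pmfs. A sum of jointly convex functions being jointly convex, $F$ is jointly convex in $(P_{T|X},Q_{T,Y},Q_A)$.

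The main obstacle is the entropy bookkeeping in the second step: one must juggle several conditional entropies and mutual informations and repeatedly exploit both that $A$ is a deterministic function of $T$ and that $Y$ is conditionally independent of $T$ given $(X,A)$ in order to see that the stray terms cancel and the expression reduces precisely to $I(X;A)+I(X;T|Y,A)$. The convexity claim and the concluding duality manipulations are routine by comparison.
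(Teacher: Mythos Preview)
Your proposal is correct and follows essentially the same route as the paper's own proof: the paper also minimizes over $Q_A$ and $Q_{T,Y}$ for fixed $P_{T|X}$ to obtain the optimizers \eqref{eq:Qa_min}--\eqref{eq:Qty_min} (attributing the technique to Blahut), and appeals to the log-sum inequality for the convexity claim. The only difference is that you spell out the entropy bookkeeping reducing $-H(Y|A)+H(Y|X,T)+I(X;Y,T)$ to $I(X;A)+I(X;T|Y,A)$ and the affine-composition argument for convexity, both of which the paper leaves implicit.
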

\begin{proof}
  The proof technique for the first part is due to \cite{blahut}, and is
  based on showing that the pmf $Q_A$ minimizing $F(\cdot )$ for fixed $Q_{T,Y}$
  and $P_{T|X}$ is
\begin{align}
 Q_{A}(a)&= \sum_{x\in\mathcal{X},t\in \mathcal{T}^a}
    P_X(x)P_{T|X}(t|x)=P_A(a)\label{eq:Qa_min},
  \end{align}
and the pmf $Q_{T,Y}$ minimizing $F(\cdot)$ for fixed $Q_A$ and $P_{T|X}$ is
given by
\begin{align}
  &Q_{T,Y}(t,y)= \sum_{x\in \mathcal{X}}  P_X(x)P_{Y|X,A}(y|x,\mathrm{a}(t))
      P_{T|X}(t|x)=P_{T,Y}(t,y) .\label{eq:Qty_min}
\end{align}
The convexity of the function $F(\cdot)$ follows from the log-sum inequality
\cite{cover:inf}.
\end{proof}
Based on Proposition~\ref{prop_rdc_alt_min}, the proposed BA-type algorithm for
computation of the rate-distortion-cost function then consists of alternate
minimizing \eqref{eq:prop_rdc} with respect to $P_{T|X}$, $Q_{T,Y}$ and
$Q_A$. Due to the convexity of \eqref{eq:prop_rdc}, the algorithm is known to
converge to the optimal point similar to \cite{dupuis}. The proposed algorithm
is summarized in Table~Algorithm~\ref{alg:rdc}.  The step of minimizing
$F(P_{T|X}, Q_{T,Y}, Q_A)$ with respect to $P_{T|X}$ is discussed in the rest of this section.
\subsection{Minimizing $F$ over $P_{T|X}$}
To minimize the function $F(P_{T|X}, Q_{T,Y}, Q_A)$ with respect to $P_{T|X}$ for fixed
$Q_A$ and $Q_{T,Y}$, we add a Lagrange multipliers
$\lambda_x$ for each equality constraints $\sum_{t\in\mathcal{T}} P_{T|X}(t|x)=1$
with $x\in\mathcal{X}$, and resort to the KKT conditions as necessary and
sufficient conditions for optimality. 
This property of the KKT conditions
follows by strong duality due to the validity of Slater's conditions for the
problem \cite[Section 5.2.3]{boyd}.
We assume $P_X(x)>0$ without loss of
generality, since values of $x$ with $P_X(x)=0$ can be removed from the alphabet
$\mathcal{X}$.

By strong duality, we obtain the following optimization problem
\begin{align}
  &\min_{\stackrel{P_{T|X}\geq 0}{\sum_{t\in\mathcal{T}} P_{T|X}(t|x)=1}} F(P_{T|X},
  Q_A, Q_{T,Y}) =\nonumber\\
  &\qquad\qquad\max_{\{\lambda_x\}\in \mathbb{R}^{|\mathcal{X}|}}
    \min_{P_{T|X}}
    F(P_{T|X},Q_A,Q_{T,Y}) + \sum_{x\in\mathcal{X}} \lambda_x \left(
        \sum_{t\in\mathcal{T}}P_{T|X}(t|x)-1
      \right).
\label{eq:maxmin_min}
\end{align}
In the proposed approach, the outer maximization in \eqref{eq:maxmin_min} is
then performed using the standard subgradient method. The inner minimization is
instead performed by finding the stationary points of the function.  This leads
to the system of equalities $g_{a|x}(P_{A|X},\mu_x)=P_{A|X}(a|x)$ for
$a\in\mathcal{A}$ and $x\in\mathcal{X}$, with
\begin{align}
  g_{ a| x}(P_{A|X},\mu_x) &= P_{A|X}(a|x)^{\beta}\left(\frac{2^{\mu_x}\alpha_{ a, x}}{
      \prod_{y\in\mathcal{Y}} \left[\sum_{\tilde x\in\mathcal{X}}
        P_X(\tilde x)P_{Y|X,A}(y|\tilde x, a)P_{A|X}(a|\tilde x)\right]^{P_{Y|X,A}(y|
        x, a)}}\right)^{1-\beta},
    \label{eq:hax_pax}
\end{align}
where
\begin{align}
  \alpha_{t,x}&=Q_A(\mathrm{a}( t)) 2^{m \Delta(\mathrm{a}( t))}
\cdot 2^{\sum_{y\in\mathcal{Y}}
  P_{Y|X,A}(y| x,\mathrm{a}( t)) \left[s d( t(y), x)
    +\log
    Q_{T,Y}( t,y) \right]}\label{eq:alpha_tx},\\
\alpha_{a,x}&=\sum_{t\in\mathcal{T}^a} \alpha_{t,x}.\label{eq:alpha_ax}
\end{align}
  and $\beta\in(0,1)$ is a parameter of the algorithm (see Appendix A). 
\begin{proposition}
  The algorithm in Tables~Algorithm~\ref{alg:rdc} and
  Algorithm~\ref{alg:subgradient} converges to the rate-distortion-cost function $R(D_{s,m},
  C_{s,m})$ for all $s\leq 0$ and $m\leq 0$. 
\label{prop:alg_conv}
\end{proposition}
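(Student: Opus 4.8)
Fix $s\leq 0$ and $m\leq 0$ throughout. The plan is to reduce the claim to the joint convexity of $F(\cdot)$ in \eqref{eq:F} established in Proposition~\ref{prop_rdc_alt_min}, and then to check, block by block, that every update in Algorithm~\ref{alg:rdc} and Algorithm~\ref{alg:subgradient} is a partial minimization of $F$. By \eqref{eq:prop_rdc} it suffices to show that the iterates drive $F(P_{T|X},Q_{T,Y},Q_A)$ to $\min F$ and the associated $P_{T|X}$ to a minimizer, since $D_{s,m}$ and $C_{s,m}$ in \eqref{eq:Csm}--\eqref{eq:Dsm} are continuous in $P_{T|X}$ and $R(D_{s,m},C_{s,m})$ is then recovered by adding the affine term $sD_{s,m}+mC_{s,m}$. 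I would split the argument into the outer alternating minimization over the three blocks $P_{T|X},Q_{T,Y},Q_A$ and the inner routine implementing the $P_{T|X}$-update.

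\emph{Outer layer.} The $Q_A$- and $Q_{T,Y}$-updates are exact block minimizations: by Proposition~\ref{prop_rdc_alt_min}, the minimizer of $F$ over $Q_A$ (resp.\ $Q_{T,Y}$), the other arguments fixed, is exactly \eqref{eq:Qa_min} (resp.\ \eqref{eq:Qty_min}). Taking the $P_{T|X}$-update to be exact as well (performed by Algorithm~\ref{alg:subgradient}), Algorithm~\ref{alg:rdc} is a cyclic alternating-minimization scheme for the jointly convex, nonnegative, smooth function $F$ over a product of probability simplices. I would then invoke the same alternating-minimization argument used for the Wyner--Ziv Blahut--Arimoto algorithm in \cite{dupuis}: $\{F(\cdot)\}$ is non-increasing and bounded below, hence convergent; by compactness every limit point of the iterates is a coordinatewise minimizer of $F$; and, since $F$ is convex and differentiable on the relative interior with separable simplex constraints for which the KKT conditions are necessary and sufficient (strong duality by Slater's condition, as noted after \eqref{eq:maxmin_min}), a coordinatewise minimizer is a global minimizer. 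Hence $F\to\min F$ and $P_{T|X}$ converges to a global minimizer, which by Proposition~\ref{prop_rdc_alt_min} yields $R(D_{s,m},C_{s,m})$.

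\emph{Inner layer.} It remains to verify that Algorithm~\ref{alg:subgradient} computes $\min_{P_{T|X}}F(P_{T|X},Q_{T,Y},Q_A)$ for fixed $Q_A,Q_{T,Y}$. This has two parts. First, the outer maximization over the normalization multipliers $\{\lambda_x\}$ in \eqref{eq:maxmin_min}: the dual function is a pointwise infimum of functions affine in $\{\lambda_x\}$, hence concave, finite, and, by strong duality, its maximum equals the constrained minimum; a subgradient at $\{\lambda_x\}$ is the constraint residual $(\sum_{t}P_{T|X}(t|x)-1)_{x\in\mathcal{X}}$ evaluated at the inner minimizer, which stays bounded along the (bounded) iterates, so with a diminishing, square-summable but not summable step size the subgradient method converges to the optimal $\{\lambda_x\}$ and the corresponding $P_{T|X}$ to the constrained minimizer (see \cite{boyd}). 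Second, for fixed $\{\lambda_x\}$ the inner minimization is obtained by solving the stationarity system $g_{a|x}(P_{A|X},\mu_x)=P_{A|X}(a|x)$ with $g_{a|x}$ as in \eqref{eq:hax_pax}: restricted to $P_{T|X}$, $F$ is strictly convex in the free directions, so this system has a unique solution (the within-action shape $P_{T|A,X}(t|a,x)\propto\alpha_{t,x}$ is pinned by \eqref{eq:alpha_tx}--\eqref{eq:alpha_ax}, which reduces the problem to the marginal $P_{A|X}$), and the $\beta$-relaxed iteration \eqref{eq:hax_pax} with $\beta\in(0,1)$ converges to it.

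The step I expect to be the main obstacle is this last one. Unlike the classical Blahut--Arimoto update, the map in \eqref{eq:hax_pax} is not a plain block minimizer: the term $D_{KL}(P_{Y,A}\|Q_A)$ in \eqref{eq:F} makes $\partial F/\partial P_{A|X}(a|x)$ depend on $P_{A|X}(a|\tilde x)$ for all $\tilde x$ (through the product over $y$ in the denominator of \eqref{eq:hax_pax}), so the naive undamped iteration need not be monotone or contractive. The resolution, carried out in Appendix~A, is to take $\beta$ close enough to $1$ that each relaxed step provably decreases the Lagrangian, $P_{T|X}$-restricted objective -- a descent estimate combined with convexity -- which forces convergence of the objective and, by uniqueness of the stationary point, of $P_{A|X}$ itself; $\beta$ thus trades convergence speed for a convergence guarantee. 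A minor additional point, standard for nested iterations, is that running the inner routine to increasing (rather than exact) accuracy across outer iterations does not disturb the outer convergence argument; alternatively the statement may be read for the idealized algorithm with exact inner minimization.
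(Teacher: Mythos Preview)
Your outer architecture matches the paper exactly: alternating minimization of the jointly convex $F$ over the three blocks (using \eqref{eq:Qa_min}--\eqref{eq:Qty_min} for the $Q$-blocks), and for the $P_{T|X}$-block the dual formulation \eqref{eq:maxmin_min} with the subgradient method on the multipliers $\{\lambda_x\}$ (equivalently $\{\mu_x\}$) using step sizes $1/i$. On all of this your justification and the paper's coincide.

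The substantive divergence is in the convergence of the inner fixed-point iteration, and here your proposal has a real gap. You conjecture that Appendix~A proceeds by a descent argument that requires $\beta$ sufficiently close to $1$, so that the damped step decreases the Lagrangian. That is not what the paper does, and your version would not establish the proposition as stated, since Algorithm~\ref{alg:subgradient} allows an arbitrary $\beta\in(0,1)$. The paper instead proves that the relaxed map is a contraction for \emph{every} $\beta\in(0,1)$ and invokes the Banach fixed-point theorem. Concretely, it passes to log-variables $q_{a|x}=\log P_{A|X}(a|x)$, writes $G=\beta I+(1-\beta)H$ with $H$ the log of the right-hand side of \eqref{eq:pax_sys_proof}, and computes the Jacobian: each row of $J_H$ has nonpositive entries, strictly negative diagonal, and row sum exactly $-1$. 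A short case analysis then shows $\sum_{a',x'}|\partial G_{\tilde a|\tilde x}/\partial q_{a'|x'}|<1$ for any $\beta\in(0,1)$, i.e.\ $\|J_G\|_\infty<1$, and the mean-value theorem on a closed ball around a stationary point (whose existence comes from coercivity of $\mathcal{L}$ and Weierstrass) yields the contraction. Banach then gives both convergence of the fixed-point iteration and \emph{uniqueness} of the fixed point; the latter is what the paper uses to identify the dual and primal minimizers, rather than the strict convexity of $F$ in $P_{T|X}$ that you invoke but do not verify.

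In short: your plan is sound down to the last step, but the descent heuristic you offer there is neither the paper's argument nor sufficient for the full range of $\beta$; the missing idea is the explicit Jacobian computation that turns the $\beta$-relaxed update into an $\ell_\infty$-contraction.
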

\begin{proof}
See Appendix~\ref{app:lem:alg_conv}.
\end{proof}

\begin{algorithm}[!t]
  \caption{Algorithm for Minimization of $F$ with respect to $P_{T|X}$}
  \label{alg:subgradient}
  \begin{algorithmic}
    \STATE \textbf{input}:  $Q_{T,Y}$ and $Q_A$.
    \STATE \textbf{output}: $P_{T|X}^*$.
    \STATE \textbf{parameters}: Subgradient weights
    $\theta_i=\frac{1}{i},i\in\mathbb{Z}_+$ and constant $\beta\in (0,1)$.
    \STATE \textbf{initialization}:  $i=0$; $\mu^{(0)}_x=1$ for $x\in\mathcal{X}$;
    $P^{(0)}_{A|X}(a|x)=\frac{1}{|\mathcal{T}|}$ for
    $t\in\mathcal{T},x\in\mathcal{X}$.
    \REPEAT
    \STATE Perform fixed-point iterations on the system
    $P_{A|X}(a|x)=g_{a|x}(P_{A|X}, \mu_x)$ for $a\in\mathcal{A}$ and $x\in\mathcal{X}$ with starting point $P^{(i)}_{A|X}$ until convergence to
    obtain $P^{(i+1)}_{A|X}$.
    \STATE Update the subgradients as\\
    $\mu^{(i+1)}_x=\mu_x^{(i)}+\frac{\theta_i}{P(x)}
    \left(1-\sum_{a\in\mathcal{A}}P_{A|X}^{(i+1)}(a|x)\right)$ for $x\in\mathcal{X}$.
    \STATE $i \gets i + 1$.
    \UNTIL convergence
    \STATE Compute $P_{T|X}^*( t| x) = \frac{\alpha_{t,x}}{\alpha_{\mathrm{a}(t),x}}P_{A|X}^{(i)}(\mathrm{a}(t)|x)$.
  \end{algorithmic}
\end{algorithm}

\section{Code Design}
\label{sec:code_design}

\begin{figure}[!t]
\centering
\subfigure[Encoder]{
  \includegraphics[width=2.5in]{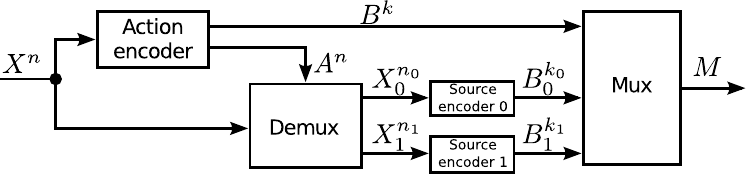}
	\label{fig:encoder}
}
\subfigure[Decoder]{
  \includegraphics[width=2.0in]{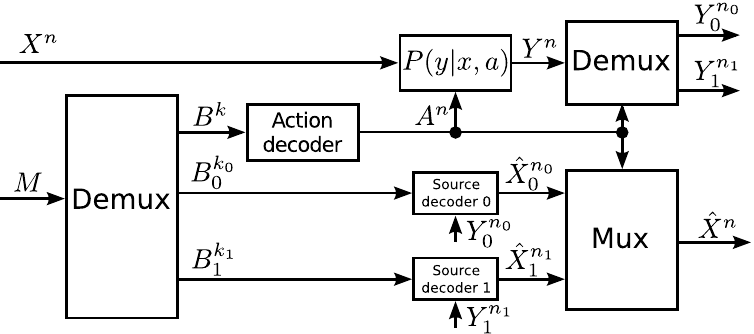}
	\label{fig:decoder}
}
\caption{Code design for source coding problems with action-dependent side
  information. The illustration is for $\mathcal{A}=\{0,1\}$.}
\label{fig:code_design}
\end{figure}
  In this section, we consider the design of specific encoders and decoders for
  the source coding problem with action-dependent side information.  The goal is
  to design codes that perform close the rate-distortion-cost function given in
  Lemma~\ref{lem:permuter} for some fixed pmf in \eqref{eq:lem_joint_pmf} (or
  equivalenty in Proposition~\ref{prop_stategies} for some fixed pmf
  $P_{X,Y,T}$).
\subsection{Achievability via Multiplexing}
As explained in Section~\ref{sec:optimal_coding}, the achievability proof in
\cite{permuter} is based on an action codebook $\mathcal{C}_A$ for the action
sequences $A^n$ of about $2^{n I(X;A)}$ codewords and $2^{nI(X;A)}$ source
codebooks of about $2^{n I(X;U|A)}$ codewords for the sequences $U^n$, where
each source codebook corresponds to an action sequence $A^n$.  We also recall
that binning is performed on the source codebooks in order to reduce the rate.

Here, we first observe that the code design can be simplified without loss of
optimality by using the encoder and decoder structures in
Fig. \ref{fig:code_design}. Accordingly, as in \cite{permuter}, the action
encoder selects the action sequence $A^n$, and the corresponding index $B^k$,
from the codebook $\mathcal{C}_A$ to the decoder, where $k=\lceil n I(X;A)
\rceil$. However, rather than using $2^{nI(X;A)}$ source codebooks, we utilize
only $|\mathcal{A}|$ source codebooks $\mathcal{C}_{s,a}$,
$a\in\mathcal{A}$. Specifically, the source codebook $\mathcal{C}_{s,a}$ has
about $2^{n P_A(a) I(X;U|A=a)}$ codewords, and each codeword in codebook
$\mathcal{C}_{s,a}$ has a length of $n_a=\lceil n (P_A(a)+\varepsilon) \rceil$
symbols for some $\varepsilon>0$.

To elaborate, as seen in Fig.~\ref{fig:encoder}, after action encoding, which
takes place as in \cite{permuter}, the source $X^n$
is demultiplixed into $|\mathcal{A}|$ subsequences, such that the $a$-th
subsequence $X_a^{n_a}$ contains all symbols $X_i$ for which $A_i=a$. Therefore,
for sufficiently large $n$, by the law of large numbers, the number of symbols
in $X_a^{n_a}$ is less than $n_a$ with high probability. Appropriate padding is
then used to make the length of the sequence exactly $n_a$ symbols. The $a$-th
subsequence $X_a^{n_a}$ is then compressed using the codebook
$\mathcal{C}_{s,a}$ with the objective of ensuring that $X_a^{n_a}$ and
$U_a^{n_a}$ are jointly typical with respect to the pmf $P_{X,U|A}(\cdot, \cdot |
a)$. Binning is performed on each source codebook so that the number of bins is
$2^{n P_A(a) I(X;U|Y,A=a)}$. 
The bin index $B_a^{k_a}$ of $U_a^{n_a}$ is thus of
$k_a=\lceil n P_A(a) I(X;U|Y,A=a) \rceil$ bits.
Overall, the rate of the message
$M$, consisting of the indices $B^k$ for the action code and $B_{s,a}^{k_a}$ for
the source codes with $a\in\mathcal{A}$, is
$I(X;A)+\sum_{a\in\mathcal{A}} P_A(a)I(X;U|Y,A=a)=I(X;A)+I(X;U|A,Y)$ as desired.

At the decoder, as seen in Fig.~\ref{fig:decoder}, the action sequence $A^n$ is
reconstructed and is used to measure the side information
$Y^n$. The side information $Y^n$ is demultiplexed into $|\mathcal{A}|$
subsequences, such that the $a$-th subsequence $Y_a^{n_a}$ contains all symbols
$Y_i$ for which $A_i=a$. Each of the subsequences $U_a^{n_a}$ are then
reconstructed by using Wyner-Ziv decoding based on the message bits $B_a^{k_a}$
and the side information $Y_a^{n_a}$, and the reconstructed source subsequences
$\hat X_{a,i}$ are obtained as $\hat X_{a,i} = \hat X^{\text{opt}}(U_{a,i},
Y_{a,i})$ for $i\in[1,n_a]$, where $\hat X_{a,i}$ denotes the $i$-th symbol of
the sequence $X_a^{n_a}$. Finally, the source reconstruction $\hat X^n$ is
obtained by multiplexing the subsequences $\hat X_{a}^{n_a}$ for
$a\in\mathcal{A}$.
\begin{rem}
  The proposed code structure also applies to the classical successive
  refinement problem \cite{successive_refinement} and can be used to simplify the code design
  proposed in \cite{ldgm}.
\end{rem}

\subsection{The Action Code}
Based on the encoder structure in Fig.~\ref{fig:encoder}, we discuss the
specific design of the action encoder. The action code
$\mathcal{C}_A$ has to ensure that the codewords $A^n$ approximately have the
type $P_A$, and the action encoder must obtain a codeword $A^n$ that is jointly
typical with respect to the joint pmf $P_{X,A}$. These conditions are satisfied by
optimal source codes \cite{polar_sc}. Optimal source codes can be designed using
LDGM codes or polar codes as shown in \cite{nonuniform_ldgm} and
\cite{polar_sc}, respectively. Here, we adopt LDGM codes as proposed in
\cite{nonuniform_ldgm, bip}. Specifically, in the following, we define an encoder
based on message passing. This uses ideas from \cite{nonuniform_ldgm} to handle
the general alphabet and pmf $P_A$, and from \cite{bip} to implement message
passing and decimation. The key difference with respect to \cite{bip} is that
there the goal of the encoder is to minimize the Hamming distance, while
the aim in this paper is to find an action sequence that is jointly typical with the source.

We use the code described by the factor graph in Fig.~\ref{fig:pax_code}. The
bottom section of the graph is a LDGM code (see,
e.g. \cite{nonuniform_ldgm}). The sequence $B^k$ denotes the message bits with
$k=\lceil n I(X;A)\rceil$ and $\{g_{\kappa,l}:\kappa\in[1,d],l\in[1,n]\}$
denote the check variables of the LDGM code, where the choice of $d$ is explained
later. The objective of the mappings $\psi_l: \{0,1\}^d \times
\mathcal{A}\rightarrow \{0,1\}$ for $l\in[1,n]$ is to ensure that the
types of the codewords, or action variables, are approximately equal to
$P_A$ \cite{nonuniform_ldgm}. Specifically, each mapping $\psi_l$ applies to
the subset of check variables $\{g_{\kappa,l}\}_{\kappa \in[1,d]}$ and to the
symbol $a_l$ and is defined in terms of a mapping $\phi: \{0,1\}^d \rightarrow
\mathcal{A}$ as
\begin{align}
  \psi_l(\{g_{\kappa,l}\}_{\kappa\in[1,d]},a) = \mathbbm{1}_{\{ \phi(\{g_{\kappa,l}\}_{\kappa\in[1,d]})=a \}}.
\end{align}
Following \cite{nonuniform_ldgm}, the value of $d\in\mathbb{Z}_+$ is chosen
such that there are integers $\nu_a$ for $a\in\mathcal{A}$ satisfying 
\begin{align}
\sum_{a\in\mathcal{A}}\nu_a=2^d \qquad \text{and} \qquad P_A(a)\approx\frac{\nu_a}{2^d}.
\end{align}
The mapping $\phi$ is then arbitrarily chosen such that exactly $v_a$ of the
$2^d$ binary sequences $\{g_{\kappa,l}\}_{\kappa\in [1,d]}$ map to $a$.

Given the source sequence $X^n$, the encoder runs the sum-product algorithm with
decimation as in \cite{bip} in order to obtain the message bits $B^k$, and hence
the action sequence $A^n$ (see \cite{polar_sc} for a discussion of the role of
decimation in source coding problems).

\begin{figure}[tpb]
  \center
  \includegraphics[width=3.5in]{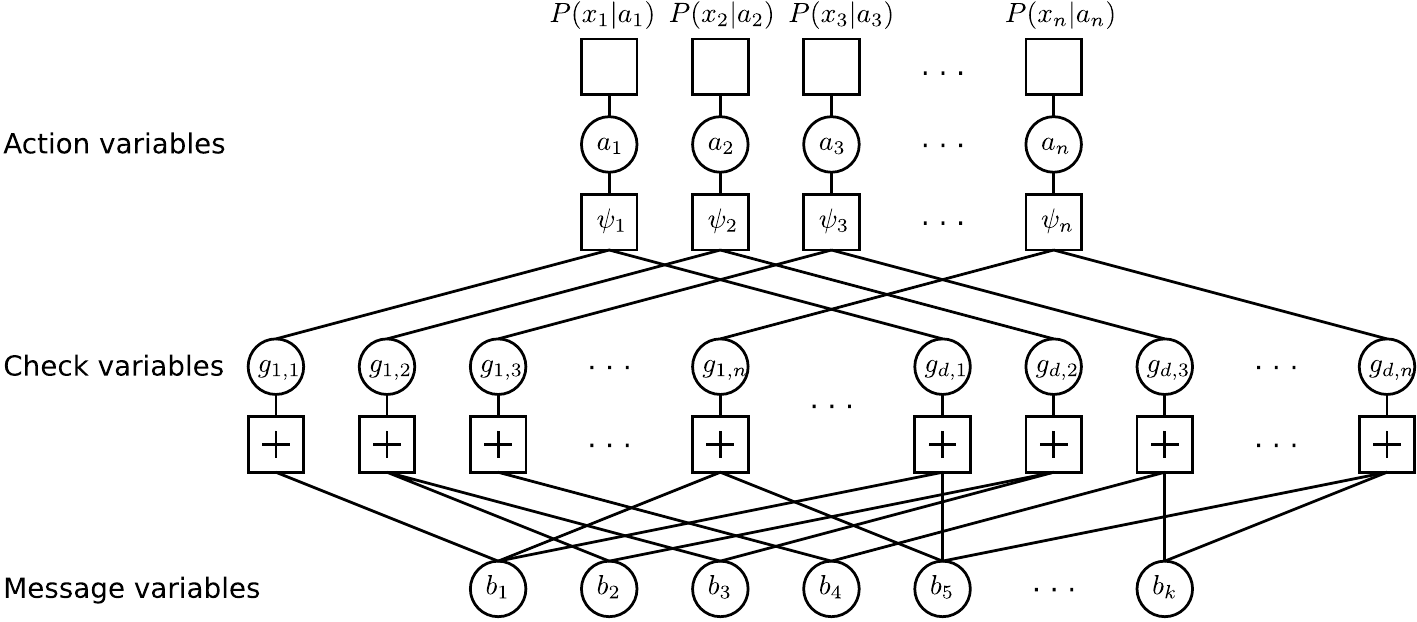}
  \caption{Factor graph defining the action encoder.}
  \label{fig:pax_code}
\end{figure}

\subsection{The Source Codes}
Based on the proposed encoder structure in Fig.~\ref{fig:encoder}, the design of
each source code $C_{s,a}$ for $a\in\mathcal{A}$ is equivalent to optimal codes
for classical Wyner-Ziv 
problems.

In the special case where $\mathcal{\hat X}=\{0,1\}$, and the distortion metric
is Hamming, the coding problem reduces to the binary Wyner-Ziv problem with
Hamming distortion which was studied in \cite{compound_codes,polar_sc}.

\section{Numerical Examples}
\label{sec:examples}
To exemplify the problems of interest and to demonstrate the tools developed in
this paper, we consider the source coding problem with action-dependent side
information depicted in Fig.~\ref{fig:simple_setup} and described in the
following.
Let $X\in \mathcal{X} = [1,K+1]$ be a random
variable with pmf 
\begin{align}
  P_X(x) = \left\{
    \begin{array}{ll} 
      \frac{1-q}{K} & \text{if}\quad x\in[1,K] \\
      \mathrm{q} & \text{if}\quad x=K+1
    \end{array}
  \right.,
\end{align}
for $q\in [0,1]$. The letters $1,\ldots,K$ denote source outcomes that are
relevant for the decoder, and thus should ideally be distinguishable by the
latter, while the letter $x=K+1$ represents a source outcome that is irrelevant
for the decoder. 
Examples where this situation arises includes monitoring
systems in which the decoder wishes to recover the values of a physical quantity
only when above, or below, a certain pre-determined threshold. To account for
this requirement, the distortion function is given by
\begin{align}
  d(x,\hat x) =\mathbbm{1}_{\{x\not=\hat x \text{ and } x\in[1,K]\}}
\label{eq:ex_dist}
\end{align}
i.e., the decoder is only penalized if it makes an error when $x$ is a relevant
letter.

At each time $i$, the decoder can choose an action $A_i\in \{0,1\}$, such that,
if $A_i = 0$, the side information is given by $Y_i = \mathrm{e}$, where
$\mathrm{e}$ denotes an erasure symbol, and if $A_i = 1$, the side information
is given by $Y_i = \tilde Y_i$, where $\tilde Y_i$ is the output of an erasure
channel in which $\mathcal{\tilde Y}=\mathcal{X}\cup \{\mathrm{e}\}$ and
\begin{align}
  P_{\tilde Y|X}(\tilde y|x) = \left\{
    \begin{array}{ll}
      p &  \text{for } \tilde y = \mathrm{e} \\
      1-p & \text{for }\tilde y=x\\
      0 &   \text{otherwise}
    \end{array}
  \right.,
\end{align}
where $p\in (0,1)$ is the erasure probability. The action cost function
$\Delta(\cdot)$ is given by $\Delta(a) = \mathbbm{1}_{\{a=1\}}$, which implies
that the cost constraint with $0 \leq C \leq 1$ enforces that no more than $nC$
samples of the side information $\tilde Y^n$ can be measured by the receiver.
\begin{figure}[!t]
  \center
  \includegraphics[width=2.5in]{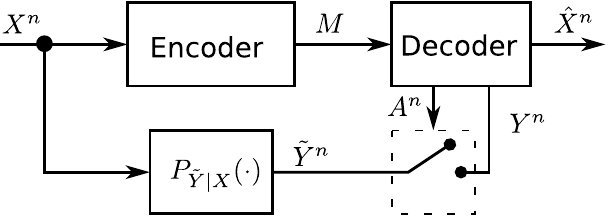}
  \caption{The action-dependent source coding problem. }
  \label{fig:simple_setup}
\end{figure}

\subsection{Computation of the Rate-Distortion-Cost Function}
We apply the proposed BA-type algorithm to the described scenario in order to
compute the rate-distortion-cost function. For reference, we also consider the
simplified strategy, in which the actions are chosen independently of the
message $M$. We refer to the optimal approach discussed thus far as ``adaptive
actions'', while labeling as ``non-adaptive actions'' the simplified class of
strategies in which the actions are selected independently of the encoder's
message (see \cite{permuter}).  The performance with non-adaptive actions can be
obtained from Proposition~\ref{prop_stategies} by imposing that $A$ and $X$ are
independent.


Fig.~\ref{fig:comp_rdc} shows $R(D,C)$ for $K=4$, $q=\frac{1}{2}$ and
$p\in\{0,0.1\}$ with both adaptive actions and non-adaptive actions. We see that
for the given scenario, we achieve significant gains using adaptive actions in
comparison to non-adaptive actions. Moreover, the effect of the erasures
decreases as the action cost decreases due to the reduced availability of the
side information at the decoder.

\begin{figure}[!t]
  \center
  \includegraphics[width=3in]{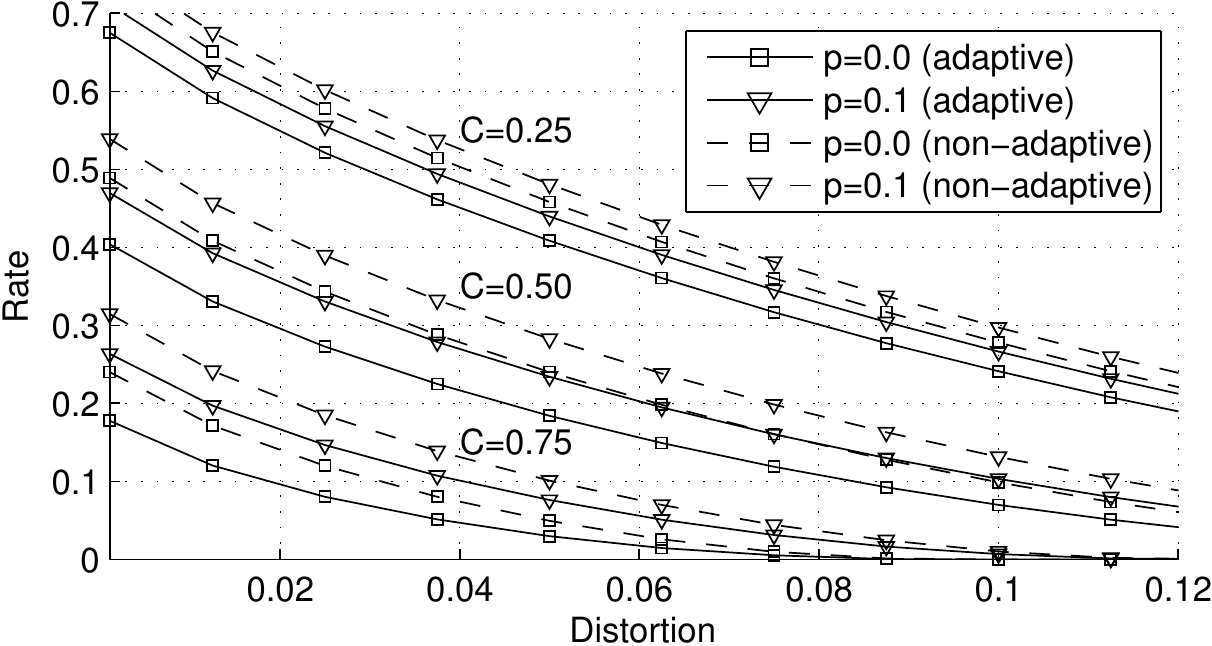}
  \caption{Computed rate-distortion-cost function $R(D,C)$ for $K=4$, erasure
    probability $p\in \{0.0,0.1\}$ and $q=\frac{1}{2}$.}
  \label{fig:comp_rdc}
\end{figure}
\subsection{Code Design}
  We now turn to the issue of code design for the scenario. We consider the
  case in which $p=0$, so that, the measured side information is noiseless and
  we adopt the code design proposed in Section~\ref{sec:code_design}.
We start with some analytical considerations of the rate-distortion-cost
function that will be useful for designing the codes. 
By symmetry,
the pmf $P_{A|X}$ can be written as
\begin{align}
   P_{A|X}(a|x) = \left\{
    \begin{array}{ll} 
      \frac{C-q \gamma}{1-q} & \text{if } a=1 \wedge x\in[1,K] \\
      \frac{1-q-C+q\gamma}{1-q} & \text{if } a=0 \wedge x\in[1,K]\\
      \gamma &\text{if }a=1 \wedge x=K+1\\
      1-\gamma &\text{if } a=0 \wedge x=K+1
    \end{array}
  \right.,
\label{eq:pax_opt}
\end{align}
where $\gamma\in\left[0,\min\left(1,\frac{C}{q}\right)\right]$ is a parameter to
be determined. The mutual information $I(X;A)$ can thus be computed in terms of
$P_{A|X}$ and $P_X$,
and the rate-distortion-cost function in \eqref{eq:rate_dist_cost} is then obtained
via the following optimization problem
\begin{align}
&R(D,C)=\min_{\gamma\in\left[0,\min\left(1,\frac{C}{q}\right)\right]} I(X;A)
+ (1-C)\bar R\left( \frac{D}{1-C}, P_{X|A=0}\right) ,
\label{eq:ex_rdc}
\end{align}
where $\bar R(D,P_X)$ is the classical rate-distortion function of a memoryless
source with pmf $P_X$. Note that we have used the fact that $I(X;U|Y,A=1)=0$
 since $Y=X$ for $A=1$. 

  From \eqref{eq:ex_rdc}, it is seen that we only need to design an action code
  and the source code $\mathcal{C}_{s,0}$, where the latter is a classical
  rate-distortion code. For the action code, we use the approach proposed in
  Section~\ref{sec:code_design} and for the source code we use the related LDGM
  scheme proposed in \cite{nonuniform_ldgm}.

We consider the case where $q=\frac{1}{2}$, $K=4$, which yields $d=2$ for both
the action code $\mathcal{C}_A$ and the source code $\mathcal{C}_{s,0}$. We fix
a blocklength of $n=\SI{10000}{}$, yielding LDGM codes of blocklength,
\SI{20000}{}. Each point is averaged over $50$ source realizations and LDGM
codes. For both codes, we use the sum-product algorithm with decimation in
\cite{bip}. As in \cite{bip}, we use damping after \SI{30}{} iterations and the
maximum number of iterations is set to $100$. Nodes are decimated if their
log-likelihood ratios are larger than $2$. Suitable irregular degree
distributions optimized for the AWGN channel are obtained from \cite{lopt}. The
results are shown in Fig. \ref{fig:numerical}. 
\begin{figure}[t]
  \center
  \includegraphics[width=3.0in]{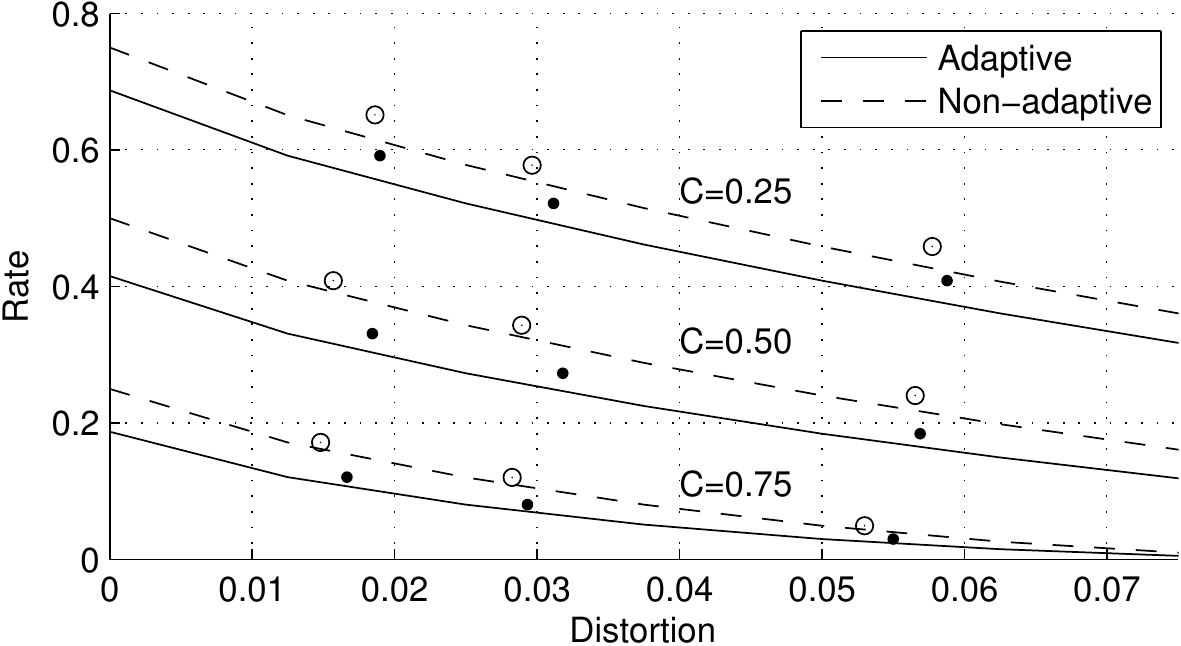}
  \caption{Rate-distortion-cost function (lines) compared to the performance of
    the proposed code design (markers) with both  adaptive and non-adaptive
    actions.}
  \label{fig:numerical}
\end{figure}
It is seen that the resulting distortions are close the lower bounds for both
the adaptive and non-adaptive actions strategies. Moreover, the theoretical
gains of the adaptive action strategy versus the non-adaptive one are confirmed
by the practical implementation.

\section{Conclusion}
In this paper, we have considered computation of the rate-distortion-code
function and code design for source coding problems with action-dependent side
information. We have formulated the problem using Shannon strategies and
proposed a BA-type algorithm that efficiently computes the rate-distortion
function. Convergence of this algorithm was proved. Moreover, we proposed a code
design based on multiplexing that was shown, via numerical results, to perform
close to the rate-distortion bound.


%

\appendices
\section{Proof for Lemma \ref{prop:alg_conv}}
\label{app:lem:alg_conv}
The BA-type algorithm detailed in Tables~Algorithm~\ref{alg:rdc} and
Algorithm~\ref{alg:subgradient} is based on alternatively optimizing $F(\cdot)$
in \eqref{eq:F} with respect to $P_{T|X}$, $Q_A$ and $Q_{T,Y}$. Given the
convexity of this function, shown in Proposition~\ref{prop_rdc_alt_min}, this
procedure is known to converge \cite{pol87}. The optimization with respect $Q_A$
for fixed $P_{T|X}$ and $Q_A$ and with respect to $Q_{T,Y}$ for fixed $P_{T|X}$
and $Q_{T,Y}$ are performed as in the proof of
Proposition~\ref{prop_rdc_alt_min}. Therefore, the proof is concluded once it is
demonstrated that the procedure of Table~Algorithm~\ref{alg:subgradient}
converges to an optimum $P_{T|X}$ for fixed $Q_A$ and $Q_{T,Y}$. This is
discussed next. The procedure in Table~Algorithm~\ref{alg:subgradient} for the
optimization with respect to $P_{T|X}$ for fixed $Q_A$ and $Q_{T,Y}$ is based on
the dual minimization \eqref{eq:maxmin_min} via an outer loop that performs
subgradient iterations and an inner loop that performs fixed-point iterations to
obtain a stationary point of the Lagrangian function \eqref{eq:lagrangian} (see
below). We first show that this nested loop procedure obtains an optimal dual
solution $P_{T|X}$ of the dual problem and then argue that this is also a
solution for the original primal problem. 

Convegence of the outer loop follows immediately by the well-known properties of
the subgradient approach for weights that are selected as $\Theta_i=\frac{1}{i}$
\cite{pol87}. Note that the constraints
$1-\sum_{a\in\mathcal{A}}P^{(i)}(a|x)$ for $x\in\mathcal{X}$ are the
subgradients with respect to $\lambda_{x}$ of the dual function given by the
minimization in \eqref{eq:maxmin_min} \cite{bertsekas:supp}. Therefore, by
defining $\mu_x=-\frac{\lambda_x}{P(x)}+2$, the updates of the variables
$\mu_x^{(i)}$ in Table~Algorithm~\ref{alg:subgradient} can be seen to correspond
to the classical subgradient updates. Given the known convergence properties of
the subgradient method with the weights as in
Table~Algorithm~\ref{alg:subgradient}, the outer maximation converges
\cite{pol87}.

Next, we need to show that we can solve the inner minimization in
\eqref{eq:maxmin_min} by using the fixed-point iterations in
\eqref{eq:fixed_point_iter} (see below). It is
first shown that we can solve the minimization problem by solving a system of
stationarity equations for $P(a|x),a\in\mathcal{A},x\in\mathcal{X}$. Then, we
conclude the proof using Banach fixed-point theorem \cite{intro:numerical}.

The Lagrangian to be minimized is given by (cf. \eqref{eq:maxmin_min})
\begin{align}
  &\mathcal{L}(P_{T|X},\{\lambda_x\})=F(P_{T|X},Q_A,Q_{T,Y})+\sum_{x\in\mathcal{X}}\lambda_x\left(\sum_{t\in\mathcal{T}}P_{T|X}(t|x)-1\right).
\label{eq:lagrangian}
\end{align}
It is noted that the function $\mathcal{L}$ is coercive in $P_{T|X}$, and hence from
Weierstrass theorem \cite{bertsekas} a minimizer of $\mathcal{L}$
exists. The minimizer must be a stationary point, i.e., it must satisfy the KKT
conditions \cite[Section 5.5.3]{boyd}.  We obtain the following stationarity
conditions by differentiating \eqref{eq:lagrangian} with respect to $P(t| x)$
and equating to zero, leading to
\begin{align}
&\log P(t| x)+ \sum_{y\in\mathcal{Y}}P(y| x,\mathrm{a}(
  t))\log \left[ P(y,\mathrm{a}(t)\right] =\nonumber\\
  &\quad m \Delta(\mathrm{a}( t)) +\sum_{y\in\mathcal{Y}}
  P(y| x,\mathrm{a}( t)) \left[s d( t(y),\tilde x)
    +\log
    Q( t,y)+\log Q(\mathrm{a}( t))\right]+\mu_x,\\
&\log P(t| x)+ \sum_{y\in\mathcal{Y}}P(y| x,\mathrm{a}(
  t))\log \left[ P(y,\mathrm{a}(t)\right] = \log \alpha_{t,x}+\mu_x\label{eq:Feq}
\end{align}
where $\alpha_{t,x}$ is given in \eqref{eq:alpha_tx} and $P(y,a)$ is calculated
from the joint pmf in \eqref{eq:joint_pxyt}. We can then rewrite \eqref{eq:Feq}
by applying the exponential function to both sides and solving for $P(t|x)$
\begin{align}
  &P( t| x)=\frac{2^{\mu_{x}} \alpha_{ t, x}}{\prod_{y\in\mathcal{Y}} \left[
      \sum_{\tilde x\in\mathcal{X}}
P(\tilde x)P(\mathrm{a}( t)|\tilde x)P(y|\tilde x,\mathrm{a}( t))
\right]^{P(y| x,\mathrm{a}( t))}}
\label{eq:ptx_eq}
\end{align}
where $\alpha_{t,x}$ is given in \eqref{eq:alpha_tx}. Note that the right-hand
side only depends on $P_{T|X}$ through $P_{A|X}$, and hence by computing
$P(a|x)$ for $a\in\mathcal{A}$ and $x\in\mathcal{X}$, $P(t|x)$ can be
calculated. By summing \eqref{eq:ptx_eq} over $ t\in\mathcal{T}^{ a}$, we obtain
\begin{align}
  P( a| x)&=\frac{2^{\mu_{x}}
  \alpha_{ a, x}}{\prod_{y\in\mathcal{Y}} \left[
    \sum_{\tilde x\in\mathcal{X}} P(\tilde x)P(y|\tilde x, a)P(a|\tilde x)\right]^{P(y|
    x, a)}},
\label{eq:pax_sys_proof}
\end{align}
where $\alpha_{a,x}$ is given in \eqref{eq:alpha_ax}. Given $\{\mu_x\}$, the
equalities in \eqref{eq:pax_sys_proof} for $a\in\mathcal{A}$ and
$x\in\mathcal{X}$ form a system of $|\mathcal{A}| |\mathcal{X}|$ nonlinear
equation with $|\mathcal{A}| |\mathcal{X}|$ unknowns, namely the $|\mathcal{A}|
|\mathcal{X}|$ values $P(a|x)$ for $a\in\mathcal{A}$ and $x\in\mathcal{X}$. By
solving for $P(a|x)$, we can compute $P(t|x)$ as in \eqref{eq:ptx_eq}. Note that
the constants $\alpha_{ a, x}$ are sums of exponential functions, and hence $P(
a| x)$ in \eqref{eq:pax_sys_proof} are strictly positive for $ a\in\mathcal{A}$
and $ x\in\mathcal{X}$. Now, define
\begin{align}
h_{a|x}(P_{A|X}, \mu_x)&=\frac{2^{\mu_x}\alpha_{ a, x}}{
      \prod_{y\in\mathcal{Y}} \left[\sum_{\tilde x\in\mathcal{X}}
      P_{X, A, Y}(\tilde x,a,y)\right]^{P_{Y|X,A}(y|
        x, a)}},\label{eq:h_ax}\\
 H_{a|x}(\mathbf{q}, \mu_x)&= \log h(2^{\mathbf{q}},\mu_x), \label{eq:H_ax}\\
\text{and } G_{a|x}(\mathbf{q}, \mu_x)&=\log g_{a|x}(2^{\mathbf{q}},\mu_x)\nonumber \\
&= \beta \mathbf{q}+(1-\beta) H_{a|x}(\mathbf{q},\mu_x)\label{eq:G_ax}
\end{align}
where $\mathbf{q}\in\mathcal{R}^{|\mathcal{A}||\mathcal{X}|}$ and
$2^{\mathbf{q}}\in\mathcal{R}_+^{|\mathcal{A}||\mathcal{X}|}$ are the vectors
corresponding to the elements $q_{a|x}=\log P(a|x)$ and $P(a|x)$,
respectively, and $\beta\in(0,1)$. Moreover, let
$\mathbf{G}(\mathbf{q},\{\mu_x\})\in\mathcal{R}^{|\mathcal{A}| |\mathcal{X}|}$
denote the vectors collecting the functions $G_{a|x}$ for
$a\in\mathcal{A},x\in\mathcal{X}$. With these definitions it is now evident that
\eqref{eq:pax_sys_proof} is equivalent to the following equation
\begin{align}
  q_{a|x} = H_{ a| x}(\mathbf{q},\{\mu_x\}).
\label{eq:sys_nonlin}
\end{align}

We now show that the fixed-point iteration of the form
\begin{align}
\mathbf{q}^{(k+1)}=\mathbf{G}(\mathbf{q}^{(k)},\{\mu_x\}).
\label{eq:fixed_point_iter}
\end{align}
converges towards a fixed-point $\mathbf{q}^*$, which is a unique fixed-point of
\eqref{eq:sys_nonlin} for any $\beta\in(0,1)$.

Recall that the existence of a fixed-point $\mathbf{q}^*$ is guaranteed by the
necessity of the KKT conditions and by Weierstrass theorem. In the following, we
apply Banach fixed-point theorem. To this end, we have to demonstrate that there
is a closed subset $\Omega\in \mathbb{R}^{|\mathcal{A}||\mathcal{X}|}$, such
that the vector function $\mathbf{G}$ maps from vectors $\mathbf{q}\in\Omega$ into
$\Omega$, and is a contraction in $\Omega$. By the existence of a fixed-point $\mathbf{q}^*$,
we define the subset $\Omega$ as the closed ball
\begin{align}
  \Omega = B_{r}(\mathbf{q}^*) = \left\{ \mathbf{q}\in \mathbb{R}^{|\mathcal{A}|
      |\mathcal{X}|} \big| \vectornorm{\mathbf{q} - \mathbf{q}^*}_{\infty} \leq r
  \right\},
\end{align}
for some $r> \vectornorm{\mathbf{q}^{(0)} - \mathbf{q}^*}_{\infty}$.  In order
to show that $\mathbf{G}$ maps from $\Omega$ into $\Omega$ and is a contraction,
we compute the partial derivatives of $H_{a|x}(\mathbf{q})$ and
$G_{a|x}(\mathbf{q})$ as following
  \begin{align}
    \frac{\partial H_{\tilde a|\tilde
        x}(\mathbf{q})}{\partial q_{a'|x'}} &=-\mathbbm{1}_{\{\tilde
    a=a'\}} \sum_{y\in\mathcal{Y}}P(y|\tilde x,\tilde
    a)\frac{ P(x')P(y|x',a')2^{q_{a'|x'}}  }{\sum_{x\in\mathcal{X}}
      P(x)P(y|x,\tilde a)2^{q_{\tilde a|x}}}\\
\text{and } \frac{\partial G_{\tilde a|\tilde
        x}(\mathbf{q})}{\partial q_{a'|x'}}&= \beta \mathbbm{1}_{\{\tilde
    a=a' \text{ and } \tilde x=x'\}} + (1-\beta) \frac{\partial H_{\tilde a|\tilde
        x}(\mathbf{q})}{\partial q_{a'|x'}}.
  \end{align}
  It is clear that the derivative $\frac{\partial H_{\tilde a|\tilde
      x}(\mathbf{q})}{\partial q_{\tilde a|\tilde x}}$ is strictly negative for
  $\mathbf{q}\in \mathbb{R}^{|\mathcal{A}| |\mathcal{X}|}$ since $P(x)>0$, and
  it can be seen that
  \begin{align}
    \sum_{a'\in\mathcal{A}, x'\in\mathcal{X}} \frac{\partial H_{\tilde a|\tilde
        x}(\mathbf{q})}{\partial q_{a'|x'}} = -1.
  \end{align}
  Therefore, for $\beta\in(0,1)$, we must have that
  \begin{align}
    \sum_{a'\in\mathcal{A}, x'\in\mathcal{X}} \left|\frac{\partial G_{\tilde a|\tilde
        x}(\mathbf{q})}{\partial q_{a'|x'}}\right| < 1.
  \end{align}
  It follows that we can bound the $l_{\infty}$-norm of the Jacobian for
  $\mathbf{G}(\mathbf{q})$, $J_{\mathbf{G}}(\mathbf{q})$, as
  \begin{align}
    \vectornorm{J_{\mathbf{G}}(\mathbf{q})}_{\infty} < 1.
  \end{align}
  By the definition of the $l_{\infty}$-norm and by the mean value theorem
  \cite{intro:numerical}, there exist values $\tilde a \in\mathcal{A}$, $\tilde
  x\in\mathcal{X}$ and $\zeta\in (0,1)$ such that
\begin{subequations}
  \begin{align}
    \vectornorm{\mathbf{G}(\mathbf{q}^1)-\mathbf{G}(\mathbf{q}^2)}_{\infty}
    &= |G_{\tilde a|\tilde x}(\mathbf{q}^1)-G_{\tilde a|\tilde
      x}(\mathbf{q}^2)|\\
    &\leq \vectornorm{\mathbf{q}^1-\mathbf{q}^2}_{\infty}  \sum_{a\in\mathcal{A}, x\in\mathcal{X}}
    \left| \frac{\partial G_{\tilde a|\tilde
          x}}{\partial q_{a|x}} ( \zeta \mathbf{q}^1 + (1-\zeta)
      \mathbf{q}^2)\right|\\
    &\leq \vectornorm{\mathbf{q}^1-\mathbf{q}^2}_{\infty} \max_{\mathbf{q}\in \Omega}
    \vectornorm{J_{\mathbf{G}}(\mathbf{q})}_{\infty}\\
    & \leq K \vectornorm{\mathbf{q}^1 - \mathbf{q}^2}_{\infty} 
  \end{align}
    \label{eq:mean_value}
\end{subequations}
for $\mathbf{q}^1,\mathbf{q}^2\in \Omega$, where the last inequality follows by
the fact that $\vectornorm{J_{\mathbf{G}}(\mathbf{q})}_{\infty}$ must attain a
maximum value $K<1$ when $\mathbf{q}\in \Omega$, since $\Omega$ is closed and
bounded, by Weierstrass theorem. The chain of inequalities in
\eqref{eq:mean_value} demonstrates that $\mathbf{G}$ is a contraction mapping.  To show
that $\mathbf{G}$ maps from $\Omega$ into $\Omega$, suppose $\mathbf{q}\in
\Omega$. Since $\Omega$ contains the fixed-point $\mathbf{q}^*$, it is then
seen that
\begin{align}
  \vectornorm{\mathbf{G}(\mathbf{q})-
    \mathbf{q}^*}_{\infty}&=\vectornorm{\mathbf{G}(\mathbf{q})-
    \mathbf{G}(\mathbf{q}^*)}_{\infty}\\
  &< \vectornorm{\mathbf{q}-\mathbf{q}^*}_{\infty} < r,
\end{align}
and hence $\mathbf{G}(\mathbf{q})\in \Omega$. By invoking the Banach fixed-point
theorem, the fixed-point iteration defined by
\eqref{eq:fixed_point_iter} converges to a unique fixed-point $\mathbf{q}^*$.

We finally observe that, since the fixed-point is unique, the minimizer of the
Lagrangian function $\mathcal{L}$ is unique, and hence the optimal $P_{T|X}$ of
the primal and the dual optimization problem coincide, thus concluding the proof.

\ifCLASSOPTIONcaptionsoff
  \newpage
\fi



\bibliography{tcom}

\bibliographystyle{IEEEtranTCOM}
%
%

%




\end{document}